\newcommand*{\N}{\ensuremath{\mathbb{N}}}
\newcommand*{\nat}{\ensuremath{\mathbb{N}}}
\newcommand{\To}{\longrightarrow}
\newcommand{\manyn}[1]{{#1}_1,        \ldots       ,       {#1}_{n}}
\newcommand{\manyk}[1]{{#1}_1,        \ldots       ,       {#1}_{k}}
\renewcommand{\k}{{\sf K}}
\renewcommand{\d}{{\sf D}}
\renewcommand{\t}{{\sf T}}
\newcommand{\kf}{{\sf K4}}
\newcommand{\df}{{\sf D4}}
\newcommand{\sr}{{\sf S4}}
\newcommand{\kdfv}{{\sf KD45}}
\newcommand{\sv}{{\sf S5}}
\renewcommand{\M}{\mathcal{M}}
\newcommand{\F}{\mathcal{F}}
\newcommand{\JL}{Justification Logic}
\title{Modal Logics with Hard Diamond-free Fragments}
\author{
	Antonis Achilleos
	} 
\institute{
	The Graduate Center of
	The City University of New York, 
	New York, USA	
	\\ 
		\email{aachilleos@gradcenter.cuny.edu
		}
}
\authorrunning{A. Achilleos}
\begin{document}
		
		\maketitle
		
\begin{abstract}
	We investigate the complexity of modal satisfiability for certain combinations of modal logics.	In particular we examine four examples of multimodal logics with dependencies and demonstrate that even if we restrict our inputs to diamond-free formulas (in negation normal form), these logics still have a high complexity. This result illustrates that having D as one or more of the combined logics, as well as the interdependencies among logics can be important sources of complexity even in the absence of diamonds and even when at the same time in our formulas we allow only one propositional variable. We then further investigate and characterize the complexity of the diamond-free, 1-variable fragments of multimodal logics in a general setting.
	
	\keywords{Modal Logic  \textperiodcentered\
	Satisfiability \textperiodcentered\
	Computational Complexity \textperiodcentered\
	Diamond-free Fragments \textperiodcentered\
	Multi-modal \textperiodcentered\
	Lower bounds}
\end{abstract}

			\section{Introduction}
			
			The complexity of the satisfiability problem for modal logic, and thus of its dual, modal provability/validity, has been extensively studied. Whether one is interested in areas of application of Modal Logic, or in the properties of Modal Logic itself, the complexity of modal satisfiability plays an important role. Ladner has established most of what are now considered classical results on the matter \cite{ladnermodcomp}, determining that most of the usual modal logics are \PSPACE-hard, while more for the most well-known logic with negative introspection, \sv, satisfiability is \NP-complete; Halpern and Moses \cite{130913} then demonstrated that \kdfv-satisfiability is \NP-complete and that the multi-modal versions of these logics are \PSPACE-complete. Therefore, it makes sense to try to find fragments of these logics that have an easier satisfiability problem by restricting the modal elements of a formula -- or prove that satisfiability remains hard even in fragments that seem trivial (ex. \cite{Halpern95theeffect,Chagrov02howmany}). 
			In this paper we present mostly hardness results for this direction and for certain cases of multimodal logics with  modalities that affect each other. 
			Relevant syntactic restrictions and their effects on the complexity of various modal logics have been examined in \cite{hemaspaandra2001PoorMan} and \cite{Hemaspaandra2010GeneralizedModalSat}.
			For more  on Modal Logic and its complexity, see \cite{130913,Fagin1995ReasoningAboutKnowledge,Spaan93PhD}.
			
			A (uni)modal formula is a formula formed by using propositional variables and Boolean connectives, much like propositional calculus, but we also use two additional operators, $\Box$ (box) and $\Diamond$ (diamond):  if $\phi$ is a formula, then $\Box \phi$ and $\Diamond \phi$ are formulas. Modal formulas are given truth values with respect to a Kripke model $(W,R,V)$,\footnote{There are numerous semantics for modal logic, but in this paper we only use Kripke semantics.} which can be seen as a directed graph $(W,R)$ (with possibly an infinite number of vertices and allowing self-loops) together with a truth value assignment for the propositional variables for each world (vertex) in $W$, called $V$. We define $\Box \phi$ to be true in a world $a$ if $\phi$ is true at every world $b$ such that $(a,b)$ is an edge, while $\Diamond$ is the dual operator: $\Diamond \phi$ is true at $a$ if $\phi$ is true at some $b$ such that $(a,b)$ is an edge. 
			
			We are interested in the complexity of the satisfiability problem for modal formulas (in negation normal form, to be defined later) that have no diamonds -- i.e. is there a model with a world at which our formula is true? When testing a modal formula for satisfiability (for example, trying to construct a model for the formula through a tableau procedure), a clear source of complexity are the diamonds in the formula. When we try to satisfy $\Diamond \phi$, we need to assume the existence of an extra world where $\phi$ is satisfied. When trying to satisfy $\Diamond p_1 \wedge \Diamond p_2 \wedge \Box \phi_n$, we require two new worlds where $p_1\wedge \phi_n$ and $p_2 \wedge \phi_n$ are respectively satisfied; for example, for $\phi_0 = \top$ and $\phi_{n+1} = \Diamond p_1 \wedge \Diamond p_2 \wedge \Box \phi_n$, this causes an exponential explosion to the size of the constructed model (if the model we construct for $\phi_n$ has $k$ states, then the model for $\phi_{n+1}$ has $2k+1$ states). There are several modal logics, but it is usually the case that in the process of satisfiability testing, as long as there are no diamonds in the formula, we are not required to add more than one world to the constructed model. Therefore, it is natural to identify the existence of diamonds as an important source of complexity. On the other hand, when the modal logic is \d,  its models are required to have a serial accessibility relation (no sinks in the graph). Thus, when we test $\Box \phi$ for \d-satisfiability, we require a world where $\phi$ is satisfied. In such a unimodal setting and in the absence of diamonds, we avoid an exponential explosion in the number of worlds and we can consider models with only a polynomial number of worlds.
			
			Several authors have examined the complexity of combinations of modal logic (ex. \cite{marx1997multi,gabbay2003many,kurucz200715}). Very relevant to this paper work on the complexity of combinations of modal logic is by
			Spaan in \cite{Spaan93PhD} and Demri in \cite{DBLP:conf/tableaux/Demri00}. In particular, Demri studied $L_1 \oplus_\subseteq L_2$, which is $L_1 \oplus L_2$ (see \cite{Spaan93PhD}) with the additional axiom $\Box_2\phi \rightarrow \Box_1\phi$ and where $L_1, L_2$ are among {\k}, {\sf T, B}, \sr, and \sv\ -- modality 1 comes from $L_1$ and 2 from $L_2$. For when $L_1$ is among {\sf K, T, B} and $L_2$ among {\sr, \sv}, he establishes \EXP-hardness for $L_1 \oplus_\subseteq L_2$-satisfiability. We consider $L_1 \oplus_\subseteq L_2$, where $L_1$ is a unimodal or bimodal logic (usually \d, or \df). When $L_1$ is bimodal, $L_1 \oplus_\subseteq L_2$ is $L_1 \oplus L_2$ with the extra axioms $\Box_3\phi \rightarrow \Box_1\phi$ and $\Box_3\phi \rightarrow \Box_2\phi$.
			
			The family of logics we consider in this paper can be considered part of the much more general family of \emph{regular grammar logics (with converse)}. Demri and De Nivelle have shown in \cite{DemriDeNivelle2005decidingregulargram} through a translation into a fragment of first-order logic that the satisfiability problem for the whole family is in \EXP\ (see also \cite{Demri01122001}). Then, Nguyen and Sza{\l}as in \cite{nguyen2011exptime}  gave a tableau procedure for the general satisfiability problem (where the logic itself is given as input in the form of a finite automaton) and determined that it is also in \EXP. 
			
			In this paper, we examine the effect on the complexity of modal satisfiability testing of restricting our input to diamond-free formulas under the requirement of seriality and in a multimodal setting with connected modalities. In particular, we initially examine four examples: $\d_2 \oplus_\subseteq \k$, $\d_2 \oplus_\subseteq \kf$, $\d \oplus_\subseteq \kf$, and $\df_2 \oplus_\subseteq \kf$.\footnote{In general, in $A\oplus_\subseteq B$, if $A$ a bimodal (resp. unimodal) logic, the modalities 1 and 2 (resp. modality 1) come(s) from $A$ and 3 (resp. 2) comes from logic $B$.} For these logics we look at their diamond-free fragment and establish that they are \PSPACE-hard and in the case of $\d_2 \oplus_\subseteq \kf$, \EXP-hard. Furthermore, $\d_2 \oplus_\subseteq \k$, $\d \oplus_\subseteq \kf$, and $\df_2 \oplus_\subseteq \kf$ are \PSPACE-hard  and $\d_2 \oplus_\subseteq \kf$ is \EXP-hard even for their 1-variable fragments.
			Of course these results can be naturally extended to more modal logics, but we treat what we consider  simple characteristic cases. For example, it is not hard to see that nothing changes when in the above multimodal logics we replace {\k} by {\d}, or {\kf} by {\df}, as the extra axiom $\Box_3 \phi \rightarrow \Diamond_3 \phi$ ($\Box_2 \phi \rightarrow \Diamond_2 \phi$ for $\d \oplus_\subseteq \kf$) is a derived one. It is also the case that in these logics we can replace $\kf$ by other logics with positive introspection (ex. \sr, \sv) without changing much in our reasoning.
			
			Then, we examine a general setting of a multimodal logic (we consider combinations of modal logics \k, \d, \t, \df, \sr, \kdfv, \sv) where we include axioms $\Box_i\phi \rightarrow \Box_j\phi$ for some pairs $i,j$. For this setting we 
			determine exactly 
			the complexity of satisfiability for the diamond-free (and 1-variable) fragment of the logic and we are able to make some interesting observations. The study of this general setting is of interest, because determining exactly when the complexity drops to tractable levels for the diamond-free fragments illuminates possibly appropriate candidates for parameterization: if the complexity of the diamond-free, 1-variable fragment of a logic drops to \P, then we may be able to develop algorithms for the satisfiability problem of the logic that are efficient for formulas of few diamonds and propositional variables; if the complexity of that fragment does not drop, then the development of such algorithms seems unlikely (we may be able to parameterize with respect to some other parameter, though). Another argument for the interest of these fragments results from the hardness results of this paper. The fact that the complexity of the diamond-free, 1-variable fragment of a logic remains high means that this logic is likely a very expressive one, even when deprived of a significant part of its syntax.
			
			A very relevant approach is presented in \cite{hemaspaandra2001PoorMan,Hemaspaandra2010GeneralizedModalSat}. In \cite{hemaspaandra2001PoorMan}, Hemaspaandra determines the complexity of Modal Logic when we restrict the syntax of the formulas to use only a certain set of operators. In \cite{Hemaspaandra2010GeneralizedModalSat}, Hemaspaandra et al. consider multimodal logics and all Boolean functions. In fact, some of the cases we consider have already been studied in \cite{Hemaspaandra2010GeneralizedModalSat}. Unlike \cite{Hemaspaandra2010GeneralizedModalSat}, we focus on multimodal logics where the modalities are not completely independent -- they affect each other through axioms of the form  $\Box_i\phi \rightarrow \Box_j\phi$. Furthermore in this setting we only consider diamond-free formulas, while at the same time we examine the cases where we allow only one propositional variable. As far as our results are concerned, it is interesting to note that in \cite{hemaspaandra2001PoorMan,Hemaspaandra2010GeneralizedModalSat} when we consider frames with serial accessibility relations, the complexity of the logics under study tends to drop, while in this paper we see that serial accessibility relations (in contrast to arbitrary, and sometimes reflexive, accessibility relations) contribute substantially to the complexity of satisfiability.
			
			Another motivation we have  is the relation between the diamond-free fragments of Modal Logic with Justification Logic. Justification Logic can be considered an explicit counterpart of Modal Logic. It introduces justifications to the modal language, replacing boxes ($\Box$) by constructs called justification terms. When we examine a justification formula with respect to its satisfiability, the process is  similar to examining the satisfiability of a modal formula without any diamonds (with some extra nontrivial parts to account for the justification terms). Therefore, as we are interested in the complexity of systems of Multimodal and Multijustification Logics, we are also interested in these diamond-free fragments. For more on Justification Logic and its complexity, the reader can see \cite{Art08RSL,Kuz08PhD}; for more on the complexity of Multi-agent \JL\ and how this paper is connected to it, the reader can see \cite{AchilleosDissertation}.
			
			It may seem strange that we restrict ourselves to formulas without diamonds but then we implicitly reintroduce diamonds to our formulas by considering serial modal logics -- still, this is not the same situation as allowing the formula to have any number of diamonds, as seriality is only responsible for introducing at most one accessible world (for every serial modality) from any other. This is a nontrivial restriction, though, as we can see from this paper's results. Furthermore it corresponds well with the way justification formulas behave when tested for satisfiability.
			
			
			
			\section{Modal Logics
			and Satisfiability}
			
			For the purposes of this paper it is convenient to consider modal formulas in negation normal form (NNF) -- negations are pushed to the atomic level (to the propositional variables) and we have no implications
			 -- and this is the way we define our languages.
			Note that for all logics we consider, every formula can be converted easily to its NNF form, so the NNF fragment of each logic we consider has exactly the same complexity as the full logic.
			We discuss modal logics with one, two, and three modalities, so we have three modal languages, $L_1 \subseteq L_2 \subseteq L_3$. They all include propositional variables, usually called $p_1,p_2,\ldots$ (but this may vary based on convenience) and $\bot$. If $p$ is a propositional variable, then $p$ and $\neg p$ are called  literals and are also included in the language and so is $\neg \bot$, usually called $\top$. If $\phi, \psi$ are in one of these languages, so are $\phi \vee \psi$ and $\phi \wedge \psi$. Finally, if $\phi$ is in $L_3$,
			then so are 
			$\Box_1\phi,\Box_2\phi,\Diamond_1\phi,\Diamond_2\phi,\Box_3\phi,\Diamond_3\phi$.
			$L_2$ includes all formulas in $L_3$ that have no $\Box_3, \Diamond_3$ and $L_1$ includes all formulas in $L_2$ that have no $\Box_2, \Diamond_2$.
			In short, $L_n$ is defined in the following way, where $1\leq i \leq n$:\ \
			$
			\phi ::= p\ |\ \neg p\ |\ \bot \ |\ \neg \bot \ |\ \phi \land \phi\ |\ \phi \lor \phi\ |\ \Diamond_i\phi\ |\ \Box_i \phi
			.$
%
			If we did not only consider formulas in negation normal form, we would include $\neg \phi$.  
			If we consider formulas in $L_1$, $\Box_1$ may just be called $\Box$.\footnote{It may seem strange that we introduce languages with diamonds and then only consider their diamond-free fragments. When we discuss \k, we consider the full language, so we introduce diamonds for $L_1,L_2,L_3$ for uniformity.}
			
			A Kripke model for a trimodal logic (a logic based on language $L_3$) is a tuple $\M = (W,R_1,R_2,R_3,V)$, where $R_1,R_2,R_3 \subseteq W\times W$ and for every propositional variable $p$, $V(p)\subseteq W$. Then, $(W,R_1,V)$ (resp. $(W,R_1,R_2,V)$) is a Kripke model for a unimodal (resp. bimodal) logic. Then, $(W,R_1), (W,R_1,R_2)$, and $(W,R_1,R_2,R_3)$ are called frames and $R_1,R_2,R_3$ are called accessibility relations. We  define the truth relation $\models$ between models, worlds (elements of $W$, also called states) and formulas in the following recursive way:
			\begin{itemize}
				\item[]
				$\M,a \not \models \bot$;
				\item[]
				$\M,a \models p$ iff $a\in V(p)$ and 	
				$\M,a \models \neg p$ iff $a \notin V(p)$;
				\item[]
				$\M,a \models \phi \wedge \psi$ iff both $\M,a \models \phi$ and $\M, a \models \psi$;
				\item[]
				$\M,a \models \phi \vee \psi$ iff  $\M,a \models \phi$ or $\M, a \models \psi$;
				\item[]
				$\M,a \models \Diamond_i \phi$ iff there is some $b \in W$ such that $a R_i b$ and $\M,b \models \phi$;
				\item[]
				$\M,a \models \Box_i \phi $ iff for all $b \in W$ such that $a R_i b$ it is the case that $\M,b \models \phi$.
			\end{itemize}

			In this paper we deal with five logics: \k, $\d_2 \oplus_\subseteq \k$, $\d_2 \oplus_\subseteq \kf$, $\d \oplus_\subseteq \kf$, and $\df_2 \oplus_\subseteq \kf$. All except for {\k} and $\d \oplus_\subseteq \kf$ are trimodal logics, based on language $L_3$, $\k$ is a unimodal logic (the simplest normal modal logic) based on $L_1$, and $\d \oplus_\subseteq \kf$ is a bimodal logic based on $L_2$. Each modal logic $M$ is associated with a class of frames $C$. A formula $\phi$ is then called $M$-satisfiable iff there is a frame $\F \in C$, where $C$ the class of frames associated to $M$, a model $\M = (\F,V)$, and a state $a$ of $\M$ such that $\M,a \models \phi$. We  say that $\M$ satisfies $\phi$, or $a$ satisfies $\phi$ in $\M$, or  $\M$ models $\phi$, or that $\phi$ is true at $a$. 
			\begin{description}
				\item[\k] 
				is the logic associated with the class of all frames; 
				\item[$\d_2 \oplus_\subseteq \k$] 
				is  the logic associated with the class of frames $\F = (W,R_1,R_2,R_3)$ for which $R_1,R_2$ are serial (for every $a$ there are $b, c$ such that $a R_1b$, $a R_2 c$) and $R_1 \cup R_2 \subseteq R_3$; 
				\item[$\d_2 \oplus_\subseteq \kf$] 
				is the logic associated with the class of frames $\F = (W,R_1,R_2,R_3)$ for which $R_1,R_2$ are serial, $R_1 \cup R_2 \subseteq R_3$, and $R_3$ is transitive; 
				\item[$\d \oplus_\subseteq \kf$] 
				is the logic associated with the class of frames $\F = (W,R_1,R_2)$ for which $R_1$ is serial, $R_1 \subseteq R_2$, and $R_2$ is transitive; 
				\item[$\df_2 \oplus_\subseteq \kf$] 
				is the logic associated with the class of frames $\F = (W,R_1,R_2,R_3)$ for which $R_1,R_2$ are serial, $R_1 \cup R_2 \subseteq R_3$ and $R_1,R_2,R_3$ are transitive.
			\end{description}
			
			\subsubsection*{Tableau}
			
			A way to test for satisfiability is by using a tableau procedure. A good source on tableaux is \cite{d1999handbook}.
			We present tableau rules for {\k} and for the diamond-free fragments of $\d_2 \oplus_\subseteq \k$ and then for the remaining three logics. The main reason we present these rules is because they are useful for later proofs and because they help to give intuition regarding the way we can test for satisfiability.
			The ones for {\k} are classical and follow right away. Formulas used in the tableau are given a prefix, which intuitively corresponds to a state in a model we attempt to construct and is a string of natural numbers, with $.$ representing concatenation. The tableau procedure for a formula $\phi$ starts from $0\ \phi$ and applies the rules it can to produce new formulas and add them to the set of formulas we construct, called a branch. A rule of the form $\frac{a}{b\ |\ c}$ means that the procedure nondeterministically chooses between $b$ and $c$ to produce, i.e. a branch is closed under that application of that rule as long as it includes $b$ or $c$. If the branch has $\sigma\ \bot$, or both $\sigma\ p$ and $\sigma\ \neg p$, then it is called propositionally closed and the procedure rejects its input. Otherwise, if the branch contains $0\ \phi$, is closed under the rules, and is not propositionally closed, it is an accepting branch for $\phi$; the procedure accepts $\phi$ exactly when there is an accepting branch for $\phi$. The rules for {\k} are in Table \ref{table:tableauforK}.
			
			\begin{table}[t]
			\begin{minipage}[l][18ex]{0.2\linewidth}
				\[ 
				\inferrule{\sigma\ \phi \vee \psi}{\sigma\  \phi \ \mid\ \sigma\  \psi }  
				\]
			\end{minipage}\hfill
			\begin{minipage}[l][18ex]{0.2\linewidth}
				\[ 
				\inferrule{\sigma\ \phi \wedge \psi}{\sigma\  \phi \\\\ \sigma\  \psi }  
				\]
			\end{minipage}\hfill
			\begin{minipage}[l][18ex]{0.22\linewidth}
				\[ 
				\inferrule{\sigma\ \Box\phi}{\sigma.i\ \phi }  
				\] where $\sigma.i$ has already appeared in the branch.
			\end{minipage}\hfill
			\begin{minipage}[l][18ex]{0.2\linewidth}
				\[ 
				\inferrule{\sigma\ \Diamond\phi}{\sigma.i\ \phi }  
				\] where $\sigma.i$ has not yet appeared in the branch.
			\end{minipage}
			\caption{Tableau rules for \k.}
						\label{table:tableauforK}
		\end{table}
			
			For the remaining logics, we are only concerned with their diamond-free fragments, so we only present rules for those to make things simpler. As we mention in the Introduction, all the  logics we consider can be seen as regular grammar logics with converse (\cite{DemriDeNivelle2005decidingregulargram}), for which the satisfiability problem is in \EXP. This already gives an upper bound for the satisfiability of $\d_2 \oplus_\subseteq \kf$ (and for the general case of $(N,\subset,F)$ from Section \ref{sec:general}). We present the tableau rules anyway (without proof), since it helps to visually give an intuition of each logic's behavior, while it helps us reason about how some logics reduce to others.
			
			To give some intuition on the tableau rules, the main differences from the rules for \k\ are that in a frame for these logics we have two or three different accessibility relations (lets assume for the moment that they are $R_1,R_3$, and possibly $R_2$), that one of them ($R_3$) is the (transitive closure of the) union of the others, and that we can assume that due to the lack of diamonds and seriality, $R_1$ and $R_2$ are total functions on the states. To establish this, notice that the truth of diamond-free formulas in NNF is preserved in submodels; when $R_1, R_2$ are not transitive, we can simply keep removing pairs from $R_1,R_2$ in a model as long as they remain serial. As for the tableau for $\df_2 \oplus_\subseteq \kf$, notice that for $i = 1,2$, $R_i$
			 can map each state $a$ to some $c$ such that for every $\Box_i \psi$, subformula of $\phi$, $c \models \Box_i\psi \rightarrow \psi$. If $a$ is such a $c$, we map $a$ to $a$; otherwise we can find such a $c$ in the following way. Consider a sequence $bR_i c_1 R_i c_2 R_i \cdots$; if some $c_j \not \models \Box_i\psi \rightarrow \psi$, then $c_j \models \Box_i\psi$, so for every $j'>j$, $c_{j'} \models \Box_i\psi \rightarrow \psi$. Since the subformulas of $\phi$ are finite in number, we can find some large enough $j \in \nat $ and set $c = c_j$.
			 Notice that using this construction on $c$, $R_i$ maps $c$ to $c$, is transitive and serial.
			
			The rules for $\d_2 \oplus_\subseteq \k$ are in Table \ref{table:tableauforD2K}.
			\begin{table}[t]\noindent
			\begin{minipage}[l][14ex]{0.22\linewidth}
				\[ 
				\inferrule{\sigma\ \phi \vee \psi}{\sigma\  \phi \ \mid\ \sigma\  \psi }  
				\]
			\end{minipage}\hfill
			\begin{minipage}[l][14ex]{0.2\linewidth}
				\[ 
				\inferrule{\sigma\ \phi \wedge \psi}{\sigma\  \phi \\\\ \sigma\  \psi }  
				\]
			\end{minipage}\hfill
			\begin{minipage}[l][12ex]{0.19\linewidth}
				\[ 
				\inferrule{\sigma\ \Box_1\phi}{\sigma.1\ \phi }  
				\] 
			\end{minipage}\hfill
			\begin{minipage}[l][12ex]{0.19\linewidth}
				\[ 
				\inferrule{\sigma\ \Box_2\phi}{\sigma.2\ \phi }  
				\] 
			\end{minipage}\hfill
			\begin{minipage}[l][12ex]{0.19\linewidth}
				\[ 
				\inferrule{\sigma\ \Box_3\phi}{\sigma.1\ \phi \\\\ \sigma.2\ \phi}  
				\] 
			\end{minipage}\hfill
			\caption{The rules for $\d_2 \oplus_\subseteq \k$}
			\label{table:tableauforD2K}
		\end{table}

			We sketch a proof that these tableau procedures are correct, i.e. for every diamond-free $\phi$,
			there is a model for $\phi$ iff there is an accepting branch for $\phi$. From an accepting branch for $\phi$ we construct a model for $\phi$: let $W$ be all the prefixes that have appeared in the branch, 
			$$R_1 = \{ (w,w.1) \in W^2 \} \cup \{(w,w)\in W^2\mid w.1 \notin W \},$$ 
			$$R_2 = \{ (w,w.2) \in W^2 \}\cup \{(w,w)\in W^2\mid w.2 \notin W \},$$ 
			$R_3 = R_1 \cup R_2
			,$ and $V(p) = \{w \in W \mid  w\ p $ appears in the branch$ \}$. Then, it is not hard to see that $(W,R_1,R_2,R_3)$ is indeed a frame for $\d_2 \oplus_\subseteq \k$ ($R_1,R_2 \subseteq R_3$ and they are all serial), and that for $\M = (W,R_1,R_2,R_3,V)$, $\M, 0 \models \phi$ -- by proving through  a straightforward induction on $\psi$ that for every $w\ \psi$ in the branch, $\M,w \models \psi$. 
			
			On the other hand, given  $\M,a \models \phi$, we can construct an accepting branch for $\phi$ in the following way. We map $0$ to $a$ and for every $w.i$, where $i = 1,2$, if $w$ is mapped to state $b$ of the model, then $w.i$ is mapped to some state $c$, where $b R_i c$. Then we can make sure we make appropriate nondeterministic choices when applying a rule to ensure that whenever $w\ \psi$ is produced and $w$ is mapped to $a$, then  $\M,a \models \psi$: if $\psi = \phi$, then this is trivially correct; if we apply the first rule on $w\ \psi_1\vee \psi_2$, then since $\M,a \models\psi_1\vee \psi_2$, it is the case that $\M,a \models \psi_1$ or $\M,a \models \psi_2$ and we can choose the appropriate formula to introduce to the branch; the remaining rules are trivial. Therefore, the branch can never be propositionally closed.
			
		To come up with tableau rules for the other three logics, we can modify the above rules. The first two rules that cover the propositional cases are always the same, so we give the remaining rules for each case. In the following, notice that the resulting branch may be infinite. However we can simulate such an infinite branch by a finite one: we can limit the size of the prefixes, as after a certain size (up to $2^{|\phi|}$, where $\phi$ the tested formula) it is guaranteed that there will be two prefixes that prefix the exact same set of formulas. Thus, we can either assume the procedure terminates or that it generates a full branch, depending on our needs. In that latter case, to ensure a full branch is generated, we can give lowest priority to a rule when it generates a new prefix.

			The rules for the diamond-free fragment of $\d_2 \oplus_\subseteq \kf$ are in Table \ref{table:d2_kf};
			\begin{table}[t]
			\begin{minipage}[l][12ex]{0.25\linewidth}
				\[ 
				\inferrule*{\sigma\ \Box_1\phi}{\sigma.1\ \phi }  
				\] 
			\end{minipage}\hfill
			\begin{minipage}[l][12ex]{0.25\linewidth}
				\[ 
				\inferrule*{\sigma\ \Box_2\phi}{\sigma.2\ \phi }  
				\] 
			\end{minipage}\hfill
			\begin{minipage}[l][17ex]{0.25\linewidth}
				\[ 
				\inferrule*{\sigma\ \Box_3\phi}{\sigma.1\ \phi \\\\ \sigma.2\ \phi \\\\ \sigma.1\ \Box_3\phi \\\\ \sigma.2\ \Box_3 \phi}  
				\] 
			\end{minipage}\hfill
			\caption{Tableau rules for the diamond-free fragment of $\d_2 \oplus_\subseteq \kf$}
						\label{table:d2_kf}
		\end{table}
			the rules for the diamond-free fragment of 
			$\d \oplus_\subseteq \kf$  in Table \ref{table:d_kf};
%
			\begin{table}[t]
			\begin{minipage}[l][10ex]{0.22\linewidth}
				\[ 
				\inferrule*{\sigma\ \Box_1\phi}{\sigma.1\ \phi }  
				\] 
			\end{minipage}\hfill
			\begin{minipage}[l][10ex]{0.23\linewidth}
				\[ 
				\inferrule*{\sigma\ \Box_2\phi}{\sigma.1\ \phi \\\\ \sigma.1\ \Box_2 \phi }  
				\] 
			\end{minipage}\hfill 
			\hspace{0.22\linewidth}
			\caption{Tableau rules for the diamond-free fragment of $\d \oplus_\subseteq \kf$}
						\label{table:d_kf}
		\end{table}
			and the rules for the diamond-free fragment of 
			$\df_2 \oplus_\subseteq \kf$ are in Table \ref{table:df2_kf}.
			\begin{table}[t]
			\begin{minipage}[l][12ex]{0.29\linewidth}
				\[ 
				\inferrule*{\sigma\ \Box_1\phi}{n_1(\sigma)\ \phi 
				}  
				\] 
			\end{minipage}\hfill
			\begin{minipage}[l][12ex]{0.29\linewidth}
				\[ 
				\inferrule*{\sigma\ \Box_2\phi}{n_2(\sigma)\ \phi 
				}  
				\] 
			\end{minipage}\hfill
			\begin{minipage}[l][15ex]{0.30\linewidth}
				\[ 
				\inferrule*{\sigma\ \Box_3\phi}{n_1(\sigma)\ \phi \\\\ n_2(\sigma)\ \phi \\\\ n_1(\sigma)\ \Box_3\phi \\\\ n_2(\sigma)\ \Box_3\phi}  
				\] 
			\end{minipage}\hfill\\
			where $n_i(\sigma) = \sigma$ if $\sigma = \sigma'.i$ for some $\sigma'$ and $n_i(\sigma)=\sigma.i$ otherwise.
			\caption{Tableau rules for the diamond-free fragment of $\df_2 \oplus_\subseteq \kf$}
			\label{table:df2_kf}
		\end{table}

We skip any proof of correctness for these cases, as they are similar to the previous case. The exception is the tableau procedure for $\df_2 \oplus_\subseteq \kf$, which is a little different and for which we must give some adjustments in the constructions of the model from the accepting branch and of the accepting branch from a model. The construction of the model is similar as for the case of $\d_2\oplus_\subseteq \k$, only this time for $i=\{1,2\}$ $R_i = \{(\sigma,n_i(\sigma))\in W^2 \}\cup \{(\sigma,\sigma)\in W^2\mid n_i(\sigma) \notin W \}$ (notice they are transitive) and $R_3$ the transitive closure of $R_1\cup R_2$. On the other hand, when constructing an accepting branch, we need to make sure that if we map $\sigma$ to $b$, then we map $\sigma.i$ to some $c$ such that for every $\Box_i \psi$, subformula of $\phi$, $c \models \Box_i\psi \rightarrow \psi$. We can find such a $c$ by considering a sequence $bR_i c_1 R_i c_2 R_i \cdots$; if some $c_j \not \models \Box_i\psi \rightarrow \psi$, then $c_j \models \Box_i\psi$, so for every $j'>j$, $c_j \models \Box_i\psi \rightarrow \psi$. Since the subformulas of $\phi$ are finite in number, we can find some large enough $j \in \nat $ and set $c = c_j$.


\begin{proposition}\label{prp:upper}
	The satisfiability problem for the diamond-free fragments of $\d_2 \oplus_\subseteq \k$, of $\d \oplus_\subseteq \kf$, and of $\df_2 \oplus_\subseteq \kf$ is in \PSPACE;  satisfiability  for the diamond-free fragment of $\d_2 \oplus_\subseteq \kf$ is in \EXP.
\end{proposition}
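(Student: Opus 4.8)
The plan is to treat $\d_2 \oplus_\subseteq \kf$ separately from the other three logics. For $\d_2 \oplus_\subseteq \kf$ the \EXP\ bound requires no new work: as already noted, every logic considered here is a regular grammar logic with converse, so its satisfiability problem — and a fortiori the satisfiability problem of its diamond-free fragment — is in \EXP\ by \cite{DemriDeNivelle2005decidingregulargram,nguyen2011exptime}. Equivalently, one reads the tableau of Table~\ref{table:d2_kf} as an alternating procedure that descends a binary tree while keeping only polynomially much information at the current node, i.e.\ an alternating polynomial-space algorithm, and alternating polynomial space equals \EXP. The substantive part of the proposition is therefore the \PSPACE\ membership of the diamond-free fragments of $\d_2 \oplus_\subseteq \k$, $\d \oplus_\subseteq \kf$ and $\df_2 \oplus_\subseteq \kf$, which I would establish directly from the tableau procedures of Tables~\ref{table:tableauforD2K}, \ref{table:d_kf} and \ref{table:df2_kf}, using the soundness and completeness sketched above.

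The first ingredient is that, being diamond-free, each of these tableaux builds a structure of very small width: every prefix has at most one $R_1$- and one $R_2$-successor, and $R_3$ (where present) is determined by $R_1,R_2$. For $\d_2 \oplus_\subseteq \k$ there is no transitivity, so every box rule strictly decreases modal depth; hence the tableau tree has depth at most $|\phi|$, and reading it as an alternating procedure — existential at the disjunction rule, universal at the split of a $\Box_3$-formula into its two successors — gives an alternating polynomial-time algorithm, hence \PSPACE. For $\d \oplus_\subseteq \kf$ the rules produce a single path $0, 0.1, 0.1.1, \dots$; and for $\df_2 \oplus_\subseteq \kf$ the maps $n_1,n_2$ collapse things so that off the root each prefix has exactly one fresh successor (the other being a self-loop), so the reachable prefixes form a root together with at most two infinite paths.

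The second ingredient is what makes \PSPACE\ attainable for the two path-shaped logics: since the formulas are diamond-free there are no eventualities to fulfil, so a branch only needs to be extendable forever without a propositional clash; and because the set of formulas attached to a prefix is a subset of $\mathrm{sub}(\phi)$, of which there are at most $2^{O(|\phi|)}$, this happens exactly when the branch can be extended for $2^{O(|\phi|)}$ steps, at which point a prefix-type repeats and the branch folds into a loop. One then checks that the fold produces a genuine frame. For $\df_2 \oplus_\subseteq \kf$ this is immediate: the $n_i$-construction already equips every state with an $R_1$- or $R_2$-self-loop (hence an $R_3$-self-loop), and the rules then make each state reflexive-consistent for every modality, so identifying a repeated state is harmless. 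For $\d \oplus_\subseteq \kf$ folding creates a genuine $R_1$-cycle, and since $R_2 \supseteq R_1$ is transitive the repeated state becomes its own $R_2$-successor; here one uses that the set of $\Box_2$-formulas asserted along the path is monotone non-decreasing, hence stable after at most $|\phi|$ steps, so that every state beyond that depth is reflexive-consistent for $\Box_2$ and the fold is sound as long as we loop past depth $|\phi|$. Putting this together, diamond-free satisfiability for $\d \oplus_\subseteq \kf$ reduces to guessing the path one node at a time — storing only the current node's label together with a polynomial-bit counter that aborts after $2^{O(|\phi|)}$ steps — and for $\df_2 \oplus_\subseteq \kf$ to guessing the root's label and then running that procedure for each of the two paths in turn; both are nondeterministic polynomial-space algorithms, hence, by Savitch's theorem, in \PSPACE.

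I expect the delicate step to be the soundness of the fold for $\d \oplus_\subseteq \kf$: one must verify carefully that looping past the stabilization depth preserves seriality of $R_1$, transitivity of $R_2$, and the inclusion $R_1 \subseteq R_2$, and — crucially — that at the repeated state every asserted $\Box_2\psi$ is matched by $\psi$ there, which is precisely the reflexive-consistency guaranteed by stabilization (the same phenomenon as the $c_j$-argument used earlier for the $\df_2 \oplus_\subseteq \kf$ tableau). The remaining parts — the precise shape of each tableau tree, the space bookkeeping, and the equivalences of alternating polynomial time with \PSPACE\ and of alternating polynomial space with \EXP — follow standard lines.
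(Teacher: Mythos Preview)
Your proposal is correct and follows essentially the same route as the paper: alternating polynomial time for $\d_2 \oplus_\subseteq \k$, alternating polynomial space for $\d_2 \oplus_\subseteq \kf$, and for the two path-shaped logics a nondeterministic polynomial-space walk with an exponential counter, exploiting that the tableaux for $\d \oplus_\subseteq \kf$ and $\df_2 \oplus_\subseteq \kf$ generate only prefixes along one or two infinite paths, so no universal branching is needed.

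One remark: your caution about the fold for $\d \oplus_\subseteq \kf$ is stronger than necessary. You do not need to wait for the $\Box_2$-set to stabilize before looping; folding at \emph{any} repeated formula-set already yields a sound model. The reason is that along the path the $\Box_2$-formulas are monotone, so if $\Box_2\psi$ appears at some state $\tau$ in the cycle it propagates forward to the repeat point $\sigma'$, hence (by equality of formula-sets) it is already present at the start $\sigma$ of the cycle, and therefore $\psi$ is present at every state of the cycle. Thus the reflexive-consistency you worry about is automatic once formula-sets repeat. The paper handles this implicitly by having already argued, before the proposition, that infinite accepting branches can be simulated by finite ones via exactly this pigeonhole; your explicit verification is fine but not an additional obligation.
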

\begin{proof} 
	We can use the rules to prove that  satisfiability of the diamond-free fragment of $\d_2 \oplus_\subseteq \k$ is in \PSPACE. In fact, we can use an alternating polynomial-time algorithm to simulate the tableau procedure and given a formula $\phi$ to construct an accepting branch for $\phi$. The algorithm uses an existential non-deterministic choice when we apply the first rule to choose which of the resulting prefixed formulas to add to the branch; it also uses a universal choice to choose between $\sigma.1$ and $\sigma.2$ for every $\sigma$ it has produced. Other than that, it applies all the tableau rules it can, until there are none left.  It is not hard to construct an accepting tableau branch from an accepting run of the algorithm and vice-versa. The fact that the algorithm runs in polynomial time can be established by observing that only up to $|\phi|$ formulas can be prefixed by a specific prefix, while the nesting depth of the boxes in the formulas (also called modal depth) strictly decreases as the length of their prefix increases.
	
	To establish upper complexity bounds for the diamond-free fragments of the remaining logics, we can use a similar procedure, only this time it is an alternating polynomial \emph{space} algorithm to simulate the tableau procedure -- we do not have the same bounds on the length of the prefixes as above, but we can just keep formulas prefixed by a single prefix in memory and as we argued before this is at most $|\phi|$ formulas -- of course this means we give priority to propositional rules. Furthermore we do not even need to keep the current prefix in memory, but we can just use a counter of polynomial size for the length of the prefix (an important point, because the length of a prefix can be exponential); when the counter becomes larger than $2^{|\phi|}$, then of course we can terminate. This gives an ({\sf A}$\PSPACE =$)\EXP-upper bound for the complexity of satisfiability for the diamond-free fragment of $\d_2 \oplus_\subseteq \kf$; to get a \PSPACE-upper bound for the other two logics, notice that  the tableau for $\d \oplus_\subseteq \kf$ uses only prefixes of the form $0.1^x$ and the tableau for $\df_2 \oplus_\subseteq \kf$ only subprefixes of  $0.(1.2)^\omega$ and $0.(2.1)^\omega$, therefore making  universal choices unnecessary.
				\qed
			\end{proof}
			
			The cases of $\d \oplus_\subseteq \kf$ and $\df_2 \oplus_\subseteq \kf$ are especially interesting.  In \cite{DBLP:conf/tableaux/Demri00}, Demri established that $\d \oplus_\subseteq \kf$-satisfiability (and because of the following section's results also $\df_2 \oplus_\subseteq \kf$-satisfiability) is \EXP-complete. In this paper, though, we establish that the complexity of these two logics' diamond-free (and one-variable) fragments are \PSPACE-complete (in this section we establish the \PSPACE\ upper bounds, while in the next one the lower bounds), which is a drop in complexity (assuming $\PSPACE \neq \EXP$), but not one that makes the problem tractable (assuming $\P \neq \PSPACE$).
			
			\section{Lower Complexity Bounds}
			
			In this section we give hardness results for the logics of the previous section -- except for \k. In \cite{Chagrov02howmany}, the authors prove that the variable-free fragment of {\k} remains \PSPACE-hard. We make use of that result here and prove the same for the diamond-free, 1-variable fragment of 	$\d_2 \oplus_\subseteq \k$. Then we prove \EXP-hardness for the diamond-free fragment of $\d_2 \oplus_\subseteq \kf$ and \PSPACE-hardness for the diamond-free fragments of $\d \oplus_\subseteq \kf$ and of $\df_2 \oplus_\subseteq \kf$, which we later improve to the same result for the diamond-free, 1-variable fragments of these logics.
			
			\begin{proposition}\label{prp:PSPford2k}
				The diamond-free, 1-variable fragment of $\d_2 \oplus_\subseteq \K$ is \PSPACE-complete.
			\end{proposition}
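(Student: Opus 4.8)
The \PSPACE\ upper bound is immediate from Proposition~\ref{prp:upper}, since the one‑variable diamond‑free fragment is contained in the full diamond‑free fragment of $\d_2\oplus_\subseteq\k$. So the real work is the lower bound, and the plan is to reduce from the variable‑free fragment of \k, which is \PSPACE-hard by~\cite{Chagrov02howmany}. Given a variable‑free \k-formula $\phi$ in NNF I would define a translation $t$ into the diamond‑free language over $\{\Box_1,\Box_2,\Box_3\}$ that uses the single variable $p$: let $t$ be the identity on $\bot$ and $\neg\bot$, let it commute with $\wedge$ and $\vee$, and set $t(\Box\psi)=\Box_3(\neg p\vee t(\psi))$ and $t(\Diamond\psi)=\Box_1(p\wedge t(\psi))\vee\Box_2(p\wedge t(\psi))$. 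Then $t(\phi)$ is diamond‑free and linear in $|\phi|$, and the reduction outputs $\phi^\ast:=p\wedge t(\phi)$.

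The intended reading is that in a model of $\phi^\ast$ the $p$-worlds carry the simulated \k-model and the $\neg p$-worlds are inert filler. Recall (from the discussion preceding Proposition~\ref{prp:upper}) that for diamond‑free NNF formulas we may assume $R_1,R_2$ are total functions and, since there is no $\Diamond_3$, that $R_3=R_1\cup R_2$; hence $\Box_i\alpha$ says ``the unique $R_i$-successor satisfies $\alpha$'', and $\Box_3$ ranges exactly over the (at most two) worlds $f_1(w),f_2(w)$. So $t(\Box\psi)$ expresses ``every $R_3$-successor that is a $p$-world satisfies $t(\psi)$'' and $t(\Diamond\psi)$ expresses ``some $R_3$-successor is a $p$-world satisfying $t(\psi)$''.

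For correctness I would show $\phi$ is \k-satisfiable iff $\phi^\ast$ is $\d_2\oplus_\subseteq\k$-satisfiable, by an induction on subformulas. The backward direction is the clean one: from $\M,w_0\models\phi^\ast$, first pass to the submodel in which $R_1,R_2$ are total functions and $R_3=R_1\cup R_2$ (this preserves truth of the diamond‑free $\phi^\ast$), then take the \k-model $\N$ whose domain is $\{w:\M,w\models p\}$ with accessibility relation $R_3$ restricted to that set; a routine induction using functionality of $R_1,R_2$ gives $\M,w\models t(\psi)$ iff $\N,w\models\psi$ for every $p$-world $w$, and $w_0$ is a $p$-world. For the forward direction I would start from a \k-model $\N$ of $\phi$, mark all its worlds with $p$, add one fresh $\neg p$-world $s$ with $R_1$- and $R_2$-loops, and for each \k-world $a$ route $R_1$ to $c_1$ and $R_2$ to $c_2$ if $a$ has children $c_1,c_2$, route both to $c_1$ if it has exactly one child, and route both to $s$ if it has none, finally taking $R_3=R_1\cup R_2$; one checks this is a frame for $\d_2\oplus_\subseteq\k$ and that the same induction yields $\M,a\models t(\psi)$ iff $\N,a\models\psi$.

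The hard part is exactly the case split in the forward direction: the representative of a \k-world has at most two $R_3$-successors, so the simulation only works when the \k-model has out‑degree at most $2$, whereas there are satisfiable variable‑free \k-formulas (e.g.\ $\Diamond\Box\bot\wedge\Diamond(\Diamond\neg\bot\wedge\Box\Box\bot)\wedge\Diamond\Diamond\neg\bot$) with no such model. The plan to close this gap is to observe that the \PSPACE-hardness of the variable‑free fragment of \k already holds for formulas that are satisfiable in binary‑branching models — this should be readable from the construction in~\cite{Chagrov02howmany}, or else one reduces from a bounded‑branching \PSPACE-complete problem such as QBF so that the target \k-formulas automatically have this property. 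Verifying this binary‑branching strengthening is the one delicate point I would expect; note that a naive attempt to encode higher out‑degree by replacing a node with a ``caterpillar'' chain of filler worlds does not help, because $t(\Box\psi)$ would then have to propagate its obligation step by step along the chain, forcing one copy of $t(\psi)$ per possible child and hence a formula of size exponential in the modal depth of $\phi$.
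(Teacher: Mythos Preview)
Your high-level approach is the paper's: reduce from the variable-free fragment of \k\ (via \cite{Chagrov02howmany}), translate diamonds into sequences of serial boxes, and use the single fresh variable to mark which nodes of the resulting binary tree are ``real'' \k-worlds. The backward direction of your correctness argument is fine.

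The gap you flag in the forward direction is genuine, and your proposed patch is where the argument is incomplete. You would need the \PSPACE-hardness of variable-free \k\ to survive the extra requirement that satisfiable instances admit binary-branching models. This is not something one can simply read off \cite{Chagrov02howmany}: their technique replaces each propositional variable by a modal marker formula, and nothing guarantees the resulting models stay binary-branching even if the underlying Ladner/QBF models were. Redoing a direct QBF reduction into your target fragment would mean handling the variable elimination and the branching bound simultaneously --- real additional work that you have not sketched. So as written the reduction is not complete.

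The paper avoids the branching restriction altogether by changing the encoding. Instead of one tree level per \k-step, it spends $\lceil\log k\rceil$ levels, where $k$ is the number of subformulas of $\phi$. Each diamond subformula $\theta_i=\Diamond\theta_j$ is assigned a unique address $dseq(i)\in\{\Box_1,\Box_2\}^{\lceil\log k\rceil}$ (its index in binary), and the translation sets $(\Diamond\theta_j)^{tr}=dseq(i)\bigl(j^{tr}\wedge q\bigr)$ and $(\Box\theta_j)^{tr}=\Box_3^{\lceil\log k\rceil}\bigl(j^{tr}\vee\neg q\bigr)$. From any ``real'' node there are now $2^{\lceil\log k\rceil}\geq k$ descendants at depth $\lceil\log k\rceil$, enough to house a distinct witness for every diamond subformula that could appear at that point, so arbitrary out-degree in the source \k-model is absorbed. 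This is precisely the fix your caterpillar remark dismisses too quickly: the point is to fan out in a \emph{balanced binary tree} of logarithmic depth rather than a linear chain, so that a single $\Box_3^{\lceil\log k\rceil}$ covers all addresses with one copy of $j^{tr}$ and the total blowup is $O(|\phi|\log|\phi|)$, not exponential.
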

			\begin{proof} 
				The upper bound was given by Proposition \ref{prp:upper}.
				We  give a translation from unimodal formulas to formulas of three modalities such that $\phi$ is \K-satisfiable if and only if $\phi^{tr}$ (the result of the translation) is $\d_2 \oplus_\subseteq \k$-satisfiable. The translation uses an extra propositional variable (not appearing in $\phi$), $q$. It is defined in the following way.
				
				We want the tableau for $\phi^{tr}$ to simulate the tableau for $\phi$. However, $\phi$ may have diamonds, which are not allowed in $\phi^{tr}$. When the tableau for \k\ encounters a diamond, then it generates a unique prefix. Therefore, we must replace a diamond with something which will generate a unique prefix in the tableau for $\d_2 \oplus_\subseteq \k$. This unique prefix can be generated by a unique sequence of boxes, which is provided by function $dseq$ (defined below):
				
				For a formula $\phi$, let $\manyk{\theta}$ be an enumeration of its subformulas 
				and in increasing order with respect to their size (to ensure that if $\eta_1$ is a subformula of $\eta_2$, then $\eta_1$ appears first). Also, 
				let\footnote{
					Notice that if there is at least one diamond in $\phi$, then $\phi$ has at least two subformulas, thus if there are diamonds, then $\log k \geq 1$; if $k=1$, then this discussion is meaningless: $\phi^{tr} = \phi$.
					} 
					\[dseq: \{1,2,\ldots, k \} \To \{\Box_1,\Box_2\}^{\lceil \log k \rceil }\] 
				be some one-to-one mapping from those subformulas to a unique sequence of boxes. The actual mapping is not important, but an easy choice would be  $dseq(x) = \Box_{x_1 + 1}\Box_{x_2+1}\cdots \Box_{x_{\lceil \log k \rceil } +1}$, where $bin(x) := x_1x_2\cdots x_{\lceil \log k \rceil }$ is the binary representation of $x$ -- so this is the one we assume. 
				We can define $i^{tr}$ by recursion	 on $i$: 
				\begin{itemize}
					\item if $\theta_i$ is a literal, $\top$, or $\bot$, then
					$i^{tr} = \theta_i$; 
					\item
					if $\theta_i = \theta_j \circ \theta_l$, where $\circ$ is either $\wedge$ or $\vee$, then
					$i^{tr} = j^{tr} \circ l^{tr}$; 
					\item
					if $\theta_i = \Box \theta_j$, then
					$i^{tr} = \Box_3^{\lceil \log k \rceil} ( j^{tr} \vee \neg q ) $; 
					\item
					finally, if $\theta_i = \Diamond \theta_j$, then $i^{tr} = dseq(i)(j^{tr} \wedge q)$. 
				\end{itemize}
				Then, $\phi^{tr} = k^{tr} \wedge q$ (as $\theta_k$ is actually $\phi$).
				The extra variable, $q$, is used to  mark which prefixes in the $\d_2 \oplus_\subseteq \k$-tableau  correspond to prefixes in the \k-tableau that have appeared.
				
				For convenience assume that in the \k-tableau for $\phi$, $\sigma\ \theta_i$, where $\theta_i = \Diamond\eta$ produces $\sigma.i\ \eta$ -- which is reasonable, since for each $\sigma$ each $\theta_i$ appears at most once.
				Assume a complete accepting \k-branch $b$ for $\phi$. Let 
				$m(0) = 0$ and $m(\sigma.i) = m(\sigma).bin(i)$.
				Then, 
%
%
				$b'$ is constructed in a recursive way, so that for every $\sigma'\ \eta, \sigma'\ q \in b'$, where $\eta \neq q, \neg q$, there is some $\sigma\ \theta_i \in b$ such that $\sigma' = m(\sigma)$ and $\eta = i^{tr}$. 
				When we apply the $\Box_1$- or $\Box_2$-rule from the ones we presented in Table \ref{table:tableauforD2K}, that is in the course of generating a prefix $m(\sigma)$ -- so, from $m(\sigma)\ i^{tr}$, where $\theta_i = \Diamond \theta_j$, we eventually generate $m(\sigma.i)\ j^{tr}$ and $m(\sigma.i)\ q$ (and some auxiliary boxed formulas in-between); when we apply the $\Box_3$-rule, then this started from some $m(\sigma)\ i^{tr}$, where $\theta_i = \Box \theta_j$, so for every $\sigma.l\ \theta_j \in b$, we produce $m(\sigma.l)\ j^{tr}$, while for $\sigma.l\ \theta_j \in b$ (where $l\leq k$), we produce $m(\sigma.l)\ \neg q$ (and  auxiliary boxed formulas in-between); when we apply a propositional rule on $m(\sigma)\ i^{tr} \circ j^{tr}$, we just need to make the same nondeterministic choice that was made for $b$ (if applicable). Then, naturally, if $b'$ is rejecting, then that is because $m(\sigma)\ p, m(\sigma)\ \neg p \in b'$, or $m(\sigma)\ \bot \in b'$; but then either $\sigma\ p, \sigma\ \neg p \in b$, or $\sigma\ \bot \in b$, respectively.
				
				On the other hand it is easier to give a complete accepting \k-branch $b$ for $\phi$ given a complete accepting $\d_2 \oplus_\subseteq \k$-branch $b'$ for $\phi^{tr}$: $b = \{\sigma\ \theta_i \mid m(\sigma)\ i^{tr} \in b' \}$. We leave the reader to verify this claim.
				
				
				Notice that $\chi^{tr}$ has no diamonds and the number of propositional variables in $\chi^{tr}$ is one more than in $\chi$. Since we can assume $\chi$ is variable-free (see \cite{Chagrov02howmany}), the proposition follows.
							\qed
			\end{proof}
			
			For the remaining logics we first present a 
			reduction to show 
			hardness 
			for their diamond-free fragments and then we can use translations to their 1-variable fragments to transfer the lower bounds to these fragments.
							We first treat the case of $\d_2 \oplus_\subseteq \kf$.
			
			\begin{lemma}\label{lem:TMlower}
				The diamond-free fragment of $\d_2 \oplus_\subseteq \kf$ is \EXP-complete, while the diamond-free fragments of $\d \oplus_\subseteq \kf$ and of $\df_2 \oplus_\subseteq \kf$ are \PSPACE-complete.
			\end{lemma}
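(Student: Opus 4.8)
The upper bounds are exactly the content of Proposition~\ref{prp:upper}, so it remains to prove the matching lower bounds, which we do by encoding space-bounded Turing machine computations into diamond-free formulas. Throughout we use, as recorded before the proposition, that since the formulas are diamond-free and in NNF we may work with models in which $R_1$ and $R_2$ are total functions; in the tree-shaped models produced by the tableau rules this means that $\Box_1\psi$ (resp.\ $\Box_2\psi$) at a prefix $\sigma$ just says $\psi$ holds at the unique $R_1$-child (resp.\ $R_2$-child) of $\sigma$, while a top-level $\Box_3\psi$ forces $\psi$ at every world of the model. When we later \emph{build} a model we use the canonical tree frame (with $R_3$ the transitive closure of $R_1\cup R_2$); when we \emph{analyze} a model a possibly larger $R_3$ only strengthens $\Box_3$, which does no harm.

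\textbf{\EXP-hardness of the diamond-free fragment of $\d_2 \oplus_\subseteq \kf$.} Since \EXP\ coincides with alternating polynomial space, we reduce from the acceptance problem of an alternating machine $M$ that on input $x$, $|x|=n$, uses space $p(n)$, halts on every path in exactly $2^{p(n)}$ steps, has its states partitioned into existential and universal ones, and has every non-halting configuration branch into two successor configurations. A configuration (tape cell by cell, head position, state, and a $(p(n)+1)$-bit step counter) is recorded by polynomially many propositional variables, one configuration per world, and the binary prefix tree of the tableau plays the role of $M$'s computation tree. The root carries the initial configuration with counter $0$ (asserted directly as a conjunction of literals), and one top-level $\Box_3$ imposes on every world: (i) push a copy of the current configuration down to both children via $\Box_1$ and $\Box_2$ into fresh ``parent-configuration'' variables, and flag the $R_1$-child with $\mathit{child}_1$ and the $R_2$-child with $\mathit{child}_2$ via $\Box_1,\Box_2$; (ii) a world flagged $\mathit{child}_i$ must record precisely the $i$-th transition-successor of its parent-configuration variables, counter incremented --- now a purely local check, since both configurations sit in the same world; (iii) a valid, non-halting, existential world satisfies $\Box_1\mathit{valid}\lor\Box_2\mathit{valid}$, a valid, non-halting, universal world satisfies $\Box_1\mathit{valid}\land\Box_2\mathit{valid}$; (iv) a valid world with counter $2^{p(n)}$ records an accepting halting configuration. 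Here $\mathit{valid}$ is auxiliary, true at the root, and its intended extension is the chosen accepting computation subtree; the point is that the rigid binary branching encodes the disjunctive duty of existential configurations through (iii) (one child made valid) and the conjunctive duty of universal ones (both children valid), whereas worlds outside $\mathit{valid}$ carry no duty beyond (i)--(ii). If $M$ accepts $x$, lay out all of $M$'s configurations along the full binary tree (self-looping at halting configurations), set counters and flags correctly, and let $\mathit{valid}$ hold on an accepting computation subtree. Conversely, in any model, following $\mathit{valid}$ from the root and invoking (i)--(iv) extracts a finite (depth $\le 2^{p(n)}$) accepting computation subtree for $M$ on $x$. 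The formula is diamond-free, in NNF, and polynomial in $|x|$, so this fragment is \EXP-hard, hence \EXP-complete.

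\textbf{\PSPACE-hardness of the diamond-free fragments of $\d \oplus_\subseteq \kf$ and $\df_2 \oplus_\subseteq \kf$.} For $\d \oplus_\subseteq \kf$ the tableau prefixes form a single line $0,0.1,0.1.1,\ldots$, with $\Box_1$ reading as ``the next world'' and $\Box_2$ as ``all later worlds''; this supports an exponentially long linear structure, and we reduce from the acceptance problem of a deterministic $p(n)$-space machine $M$ (which is \PSPACE-complete). Consecutive worlds carry consecutive configurations of $M$ (with a step counter), the root carries the initial configuration, a top-level $\Box_2$ over $\Box_1$ forces the (deterministic) transition and the counter increment at every step, and --- since without a diamond we cannot assert reachability of an accepting state --- we normalize $M$ to loop forever in its accepting configuration once it accepts, so that ``$M$ accepts $x$'' becomes ``the configuration whose counter equals $2^{p(n)}$ is accepting'', expressed as $\Box_2(\mathit{counter}{=}2^{p(n)}\to\mathit{accepting})$. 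Satisfiability of the resulting diamond-free, NNF, polynomial-size formula is equivalent to $M$ accepting $x$. For $\df_2 \oplus_\subseteq \kf$ the prefixes lie along the two lines $0.(1.2)^\omega$ and $0.(2.1)^\omega$, and because of the self-loops built into $n_1,n_2$ the ``next world'' step alternates between $\Box_1$ (from even depth) and $\Box_2$ (from odd depth); we run the same reduction, adding a single parity variable maintained along the line that selects which of $\Box_1,\Box_2$ carries the transition at each step, with $\Box_3$ again serving as ``all later worlds''. In both cases Proposition~\ref{prp:upper} gives the matching \PSPACE\ upper bound, so these fragments are \PSPACE-complete.

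The main obstacle is the alternation in the $\d_2 \oplus_\subseteq \kf$ case: one must force the fixed binary branching of the prefix tree to simulate faithfully both the existential (``some successor'') and the universal (``all successors'') modes of the machine --- which is what the $\mathit{valid}$ flag together with clause (iii) does --- and one must make a transition check that is global (it lives under $\Box_3$ and concerns parent/child worlds) into a local one despite each world holding an entire configuration --- which is what the parent-configuration copying in (i)--(ii) does. Checking that the step counter, the forced halting after $2^{p(n)}$ steps, and seriality interact so that ``$M$ accepts'' corresponds \emph{exactly} to ``the formula is satisfiable'' is the bulk of the verification; the two \PSPACE\ cases are then the same argument restricted to a linear structure, with the parity variable the only extra ingredient needed for $\df_2 \oplus_\subseteq \kf$.
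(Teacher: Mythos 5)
Your overall strategy is the same family as the paper's: generic Fischer--Ladner-style reductions from space-bounded alternating (resp.\ deterministic) Turing machines, with configurations coded by polynomially many variables, seriality of $R_1,R_2$ supplying successor worlds, and the transitive modality broadcasting local constraints. One literal slip needs fixing, though: a top-level $\Box_3$ (resp.\ $\Box_2$) does \emph{not} impose your constraint block ``on every world'' --- it only reaches $R_3$- (resp.\ $R_2$-) successors, not the root. As written (initial-configuration literals $\wedge\ \mathit{valid}\ \wedge\ \Box_3 C$), nothing enforces clauses (i)--(iii) at the root itself, and then the backward direction collapses: if the root violates (i), the children's parent-configuration variables are unconstrained, so the formula can be satisfied by a ``junk'' subtree whether or not $M$ accepts. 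The repair is exactly what the paper does with $com \wedge \Box_3\, com$: conjoin the constraint block un-boxed at the root as well (and likewise in the two linear cases). Your exact-time-halting/counter normalization also needs the routine care you partly acknowledge (transitions defined at halting configurations via self-loops, counter wrap-around past $2^{p(n)}$ being harmless), but that is standard.

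Beyond that, your route differs from the paper's in three design choices, all workable. First, acceptance: the paper uses no clock and no marker --- it asserts $\neg q[N]$ everywhere (``never reach the rejecting state'') and lets $d_E = \Box_1 D(\delta_1) \vee \Box_1 D(\delta_2)$ and $d_U = \Box_1 D(\delta_1) \wedge \Box_2 D(\delta_2)$ handle alternation directly, whereas you add a step counter, force all paths to halt at time $2^{p(n)}$, and thread a $\mathit{valid}$ marker through the tree; sound, but heavier to verify. Second, transition checking: the paper checks the successor from the parent's side (the persistence formula $\sigma'$ plus the successor description under $\Box_1/\Box_2$), while you copy the parent configuration into the child and check locally there; these are equivalent. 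Third, for $\df_2\oplus_\subseteq\kf$ the paper reuses the $\d\oplus_\subseteq\kf$ reduction via the syntactic translation $\Box_1 \mapsto \Box_1\Box_2$, $\Box_2 \mapsto \Box_1\Box_3\Box_2$ (needed precisely because transitivity of $R_1,R_2$ kills single-modality chains), whereas you redo the encoding directly with a parity variable alternating which of $\Box_1,\Box_2$ plays ``next''. Your parity trick gives a self-contained argument; the paper's translation buys reuse, and since it introduces no new propositional variables it also transfers the $1$-variable lower bound of Proposition~\ref{prp:EXPc1var} for free.
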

			
			\begin{proof} 
				The upper bounds were given by Proposition \ref{prp:upper}.
				The proof for the lower bounds resembles the one in \cite{fischer1979propositional} and is by reduction from a generic {\sf A}\PSPACE\ problem given as the alternating Turing machine of two tapes (input and working tape) which uses polynomial space to decide it. Let the  machine be $(Q,\Sigma,\delta,s)$, where $Q$ the set of states, $\Sigma$ the alphabet, $\delta$ the transition relation  and $s$ the initial state. Let $Q = U \cup E$, where $E$ and $U$ are distinct, $E$ the set of existential and $U$ the set of universal states and assume that the machine only has two choices at every step of the computation, provided by two transition \emph{functions}, $\delta_1, \delta_2$: when the transition functions are given state $q \in Q$, and symbols $a,b \in \Sigma$ for tape 1 and 2 respectively, for $i=1,2$, $\delta_i(q,a,b) = (q',c,j_1,j_2) \in Q\times\Sigma\times\{0,-1,1\}^2$, where $q'$ the new state, $c$ the symbol to replace $b$ in tape $2$, and $j_1,j_2$ the respective moves for each tape, where $0$ indicates no move, $-1$ a move to the left, and $1$ a move to the right. 
				Furthermore, let $x = x_1x_2\cdots x_{|x|}$ be the input, where for every $i \in \{1,2,\ldots, |x| \}$, $x_i \in \Sigma$. 
				Since the Turing machine uses polynomial space, there is a polynomial $p$, such that the working tape only uses cells $1$ to $p(|x|)$ for an input $x$. For the input tape, we only need cells $0$ through $|x|+1$ (we may assume additional symbols to indicate the beginning and end of the input), because the head does not go any further and an output tape is not needed, since we are interested only in decision problems. Therefore, there are $Y, N \in Q$, the accepting and rejecting states respectively. Let $r_1 = \{0,1,2,\ldots, |x| + 1 \}$ and $r_2 = \{1,2,\ldots, p(|x|)\}$. A configuration $c$ of the Turing machine is called accepting if the computation of the machine that starts from $c$ is an accepting computation.
				
				For this reduction, a formula will be constructed that will enforce that any model satisfying it must describe a computation by the Turing machine. Each propositional variable will correspond to some fact about a configuration of the machine and the following propositional variables will be used: 
				\begin{itemize}
					\item $t_1[i], t_2[j]$, for every $i \in r_1, j \in r_2$; $t_1[i]$ will correspond to the head for the first tape pointing at cell $i$ and similarly for $t_2[j]$,
					\item $\sigma_1[a,i],\sigma_2[a,j]$, for every $a \in \Sigma$, $i \in r_1, j \in r_2$; $\sigma_1[a,i]$ will correspond to cell $i$ in the first tape having the symbol $a$ and similarly for $\sigma_2[a,j]$ and the second tape,
					\item $q[e]$, for every $e \in Q$; $q[e]$ means the machine is currently in state $e$.
				\end{itemize}
				For each configuration $c$ of the Turing machine there is a formula that describes it. This formula is the conjunction of the following and from now  on it will be denoted as $\phi_c$: $q[e]$, if $e$ is the state of the machine in $c$; $t_1[i]$ and $t_2[j]$, if the first tape's head is on cell $i$ and the second tape's head is on cell $j$; $\sigma_1[a_1,i_1], \sigma_2[a_2,i_2]$, if $i_1 \in r_1, i_2 \in r_2$ and $a_1$ is the current symbol  in cell $i_1$ of the first tape and $a_2$ is the current symbol in cell $i_2$ of the second tape.
					
				We need  the following formulas. Intuitively, a world in a model for $\phi$ corresponds to a configuration of our Turing machine. $q$ ensures there is exactly one state at every configuration; $\sigma$ that there is exactly one symbol at every position of every tape; $t$ that for each tape the head is located at exactly one position; $\sigma'$ ensures that the only symbols that can change from one configuration to the next are the ones located in a position the head points at; $ac$ ensures we never reach a rejecting state (therefore the machine accepts); $st$ starts the computation at the starting configuration of the machine; finally, $d_E, d_U$ ensure for each configuration that the next one is given by the transition relation (functions). Then, if $com = q\wedge \sigma \wedge t \wedge \sigma' \wedge ac \wedge d_{E} \wedge d_{U}$ we define  $\phi = st \wedge com \wedge \Box_3 com$.
				\[
				q = 
				\bigvee_{e \in Q} q[e]
				\wedge \bigwedge_{\substack{e,f \in Q, \\ e\neq f}} \neg \left( q[e] \wedge q[f] \right) ;
				\]
				\[
				\sigma = 
				\bigwedge_{\substack{j \in \{1,2\}, \\ i\in r_j}}\left[
				\bigvee_{a \in \Sigma} \sigma_{j}[a,i]
				\wedge \bigwedge_{\substack{a,b \in \Sigma, \ a\neq b}} \neg \left( \sigma_{j}[a,i] \wedge \sigma_{j}[b,i] \right) \right];
				\]
				\[
				t = 
				\bigwedge_{j \in \{1,2\}} \left[ 
				\bigvee_{i \in r_j} t_j[i] 
				\wedge \bigwedge_{\substack{i,k \in r_j \ i \neq k}} \neg \left( t_j[i] \wedge t_j[k] \right) \right];
				\]
				\[
				\sigma' = 
				\bigwedge_{\substack{j \in \{1,2\},\ \  i, i' \in r_j, \\ i \neq i', \ \ \ a \in \Sigma}} \left[ \left( t_j[i] \wedge \sigma_{j}[a,i'] \right) \rightarrow \Box_1 \sigma_{j}[a,i'] \wedge \Box_2 \sigma_{j}[a,i'] \right];
				\]
				$ac = \neg q[N]$;\\ 
				$	st = \phi_{c_0}$,  where $c_0$ is the initial configuration of the machine;
\\
let 
$locconf(e,i_1,i_2,j_1,j_2) =  
q[e] \wedge \sigma_1[i_1,j_1] \wedge \sigma_2[i_2,j_2] \wedge t_1[j_1] \wedge t_2[j_2]
$ and 
$D(e,k,l_1,l_2,m_1,m_2) = q[e] \wedge \sigma_2[k,l_2] \wedge t_1[l_1+m_1] \wedge t_2[l_2+m_2]$; then,
				\[
				d_{E} = 
				\bigwedge_{\substack{(e,i_1,i_2)\in E\times \Sigma \times \Sigma, \\ j_1 \in r_1, \ j_2 \in r_2}} 
				\left[
				\begin{array}{l} 
				locconf(e,i_1,i_2,j_1,j_2)
				\rightarrow 
				\\ \ \quad
			\begin{array}{l} 
				\Box_1
				D(e_1,k_1,j_1,j_2,m_1^1,m_2^1)
				\\ \
												 \quad
				\vee \
				\Box_1
				D(e_2,k_2,j_1,j_2,m_1^2,m_2^2)
				\end{array}
				\end{array}
				\right] ,
				\]where $(e_1,k_1,m_1^1,m_2^1)= \delta_1(e,i_1,i_2)$, $(e_2,k_2,m_1^2,m_2^2) = \delta_2(e,i_1,i_2)$;
				\[
				d_{U} = 
				\bigwedge_{\substack{(e,i_1,i_2)\in U\times \Sigma \times \Sigma, \\ j_1 \in r_1, \ j_2 \in r_2}} 
				\left[
				\begin{array}{l} 
								locconf(e,i_1,i_2,j_1,j_2)
								\rightarrow 
								\\ \ \qquad 
												\begin{array}{l} 
								\Box_1
								D(e_1,k_1,j_1,j_2,m_1^1,m_2^1)
								\\ \ 
																 \quad
								\wedge \
								\Box_2
								D(e_2,k_2,j_1,j_2,m_1^2,m_2^2)
				\end{array}
				\end{array}
				\right] , 
				\]  where $(e_1,k_1,m_1^1,m_2^1)= \delta_1(e,i_1,i_2)$, $(e_2,k_2,m_1^2,m_2^2) = \delta_2(e,i_1,i_2)$.
				
				The few implications that appear above are of the form $a\wedge b \wedge \cdots \wedge c \rightarrow \psi$ (where $a,b,\ldots,c$ are propositional variables) and can thus be rewritten in negation normal form: $\neg a \vee \neg b \vee \cdots \vee \neg c \vee \psi$.
				 The correctness of the reduction follows from the following two claims.
				
				\emph{Claim: If for some model $\M, w \models \phi$ and for some $u$, such that $(u=w$ or $w R_{3} u)$, $u \models \phi_c$ and $c_1, c_2$ are the next configurations from $c$, then if $c$ a universal configuration, there are $w R_{3} u_1$ and $wR_3  u_2$, such that $u_1 \models \phi_{c_1}$, $u_2 \models \phi_{c_2}$ and if $c$ an existential configuration, there is some $w R_{3} u_1$, such that either $u_1 \models \phi_{c_1}$ or $u_1 \models \phi_{c_2}$.}
				From this claim, it immediately follows that if $\phi$ is satisfiable, then the Turing machine accepts its input (since it never rejects it
				). We prove the claim for the case of the universal configuration. Because of formulas $q, \sigma, t$, in every world $v$, such that $w R_{3} v$, there is exactly one $\phi_c$ satisfied. 
				There are worlds $u_1, u_2$, (because of seriality of $R_{1}, R_{2}$) such that $w R_{1} u_1$ and $w R_{2} u_2$ and if $u_1 \models \phi_{c_3}$, $u_2 \models \phi_{c_4}$, then because of $d_{U}$, $c_3$ will differ from $c$ in all respects $\delta_1$ demands; furthermore, because of $\sigma'$, $c_3$ differs only in the ways $\delta_1$ (or $\delta_2$) demands and we can reason the same way for $c_4$. Therefore, $\{c_3,c_4\}=\{c_1,c_2\}$. 
				
				\emph{Claim: If the Turing machine accepts $x$, then $\phi$ is satisfiable.}
				Given the machine's computation tree for $x$, we can construct model $(W,R_1,R_2,R_3,V)$ for $\phi$. $W$ is the set of configurations in the computation tree; let $R_1,R_2$ be minimal such that if $u$ is a universal configuration and $v,w$ its next configurations, then $u R_1 v$ and $uR_2 w$ (or $uR_2 v$ and $uR_1 w$), while if $u$ an existential configuration and $v$ its next accepting configuration, then $u R_1 v$ and $uR_2 v$; let $R_3$ be the transitive closure of $R_1 \cup R_2$. $V$ is defined to be such that if $\M = (W,R_1,R_2,R_3,V)$, then $\M,u \models \phi_u$. Then, it is not hard to see that $\M,c_0 \models \phi$.
				
				For the case of  $\d \oplus_\subseteq \kf$, notice that if the machine is deterministic, we can eliminate $d_{U}$, half of $d_E$ and the subformulas beginning with $\Box_2$ from $\sigma'$  and rename the remaining modalities from $\Box_1,\Box_3$ to $\Box_1,\Box_2$. 
				For the case of $\df_2 \oplus_\subseteq \kf$, we can define a translation from the language of $\d \oplus_\subseteq \kf$ to the language of $\df_2 \oplus_\subseteq \kf$: given a formula $\phi$ with $\Box_1,\Box_2$ as modalities, 
				simply replace $\Box_2$ by $\Box_1\Box_3\Box_2$ and $\Box_1$ by $\Box_1\Box_2$.
				The remaining argument is similar for the one for the case of $\d_2 \oplus_\subseteq \k$ -- the iteration of $\Box_1$ and $\Box_2$ helps cut off the propagation of boxes in the tableau, which does not happen for $\d \oplus_\subseteq \kf$.
										\qed 
			\end{proof}								
%
%
			From Lemma \ref{lem:TMlower}, with some extra work, we can prove the following.
			
			\begin{proposition} \label{prp:EXPc1var}
				The 1-variable, diamond-free fragment of $\d_2 \oplus_\subseteq \kf$ is \EXP-complete; the 1-variable, diamond-free fragments of $\d \oplus_\subseteq \kf$ and of $\df_2 \oplus_\subseteq \kf$ are \PSPACE-complete.
			\end{proposition}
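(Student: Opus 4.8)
The upper bounds are inherited from Proposition~\ref{prp:upper}, since the $1$-variable diamond-free fragment of each of these logics is contained in its diamond-free fragment. For the lower bounds the plan is to repeat the reduction from alternating polynomial-space Turing machines of Lemma~\ref{lem:TMlower}, but to encode the polynomially many propositional variables $t_j[\cdot],\sigma_j[\cdot],q[\cdot]$ --- say $m=\mathrm{poly}(|x|)$ of them --- into a single fresh variable $q$, in the spirit of the translation $\cdot^{tr}$ of Proposition~\ref{prp:PSPford2k}. A configuration is no longer described by the valuation at one world, but by a world $w$ (its \emph{head}) together with a gadget of auxiliary worlds reached from $w$ by fixed sequences of box modalities; the value of the $i$-th atom at $w$ is recorded by whether $q$ holds at the auxiliary world addressed by the $i$-th such sequence. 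For $\d_2\oplus_\subseteq\kf$ we have two serial, non-transitive modalities $\Box_1,\Box_2$, so binary addressing with sequences of length $\lceil\log m\rceil$ is available (exactly like $dseq$ in Proposition~\ref{prp:PSPford2k}); moreover the two successor edges of the construction of Lemma~\ref{lem:TMlower} (for $\delta_1$ and $\delta_2$) can be taken to be two further addressed leaves of the gadget, so that $\Box_1,\Box_2$ are entirely free to serve as pointers and the formulas $d_E,d_U$ are rewritten so as to read the atoms of a successor configuration by navigating to its head and then to the appropriate addressed leaves. The resulting formula is diamond-free, of polynomial size, and over the single variable $q$.

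The main obstacle is that the global constraint of Lemma~\ref{lem:TMlower} is propagated by the transitive modality $\Box_3$ (by $\Box_2$ for $\d\oplus_\subseteq\kf$), which reaches \emph{every} world of the model, in particular every auxiliary gadget world, and which --- being transitive --- provides no modal-depth counter by which configuration heads could be recognised. I would resolve this by letting $q$ serve double duty: besides encoding atoms at the addressed leaves it also marks configuration heads, and every constraint is relativised, so the propagated formula becomes $\Box_3(q\rightarrow com^{tr})$, which is vacuous at the $\neg q$-worlds. An addressed leaf carrying $q$ (i.e.\ encoding a true atom) is itself a $q$-world, hence must satisfy $com^{tr}$; to make this harmless one attaches to it the fixed \emph{dummy} computation --- the one that remains forever in the accepting state $Y$ (extend $\delta$ so that $Y$ loops on itself) with blank tapes --- built out with its own gadgets, which never reaches the rejecting state $N$, so $com^{tr}$ holds throughout it; the $\neg q$-worlds (addressed leaves of false atoms, and internal gadget nodes) get self-loops and carry no constraint. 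Correctness then follows the two-claim pattern of Lemma~\ref{lem:TMlower}: for the ``if'' direction one decorates the accepting computation tree with these gadgets and dummy sub-models; for the ``only if'' direction one first thins $R_1,R_2$ to partial functions by the submodel argument of the previous section, so that each addressing sequence points to a unique world and the decoding of atoms is unambiguous, and then reads a model of the original formula off the $q$-worlds that are reachable from the root through the successor leaves.

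For $\df_2\oplus_\subseteq\kf$ it is simplest to compose the result for $\d\oplus_\subseteq\kf$ with the language translation from the proof of Lemma~\ref{lem:TMlower} ($\Box_2\mapsto\Box_1\Box_3\Box_2$, $\Box_1\mapsto\Box_1\Box_2$), which preserves both diamond-freeness and the number of variables. The genuinely delicate case is $\d\oplus_\subseteq\kf$: here only $\Box_1$ is a non-transitive pointer modality (the other, $\Box_2$, being the transitive propagation modality), binary addressing is unavailable, and one must fall back on a unary/positional scheme --- laying the $m$ addressed worlds of a configuration along the $\Box_1$-line below its head and separating consecutive configurations by guard worlds. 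The hard part will be arranging this so that the ``shifted'' windows seen from the non-head worlds on the line do not force spurious constraints; as before this is handled by using $q$ to mark the head positions and by filling everything else with dummy computations, while keeping the formula polynomial. This bookkeeping --- above all for $\d\oplus_\subseteq\kf$ --- is precisely the ``extra work'' the statement alludes to.
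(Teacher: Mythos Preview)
Your proposal is workable but takes a harder road than the paper does. The paper does \emph{not} redo the Turing-machine reduction; instead it gives a purely syntactic satisfiability-preserving translation from the many-variable diamond-free fragment (already shown hard in Lemma~\ref{lem:TMlower}) to the one-variable diamond-free fragment. For $\d_2\oplus_\subseteq\kf$ the translation is simply $p_i\mapsto\Box_1\Box_2^i q$ and $\neg p_i\mapsto\Box_1\Box_2^i\neg q$, applied to every literal occurrence in $\phi$; for $\d\oplus_\subseteq\kf$ it is $p_i\mapsto\Box_1^{2i}q$, $\neg p_i\mapsto\Box_1^{2i}\neg q$, together with a fixed ``mould'' subformula $s$ that forces $q$ at positions $0,1$ and $\neg q$ at odd positions of each block of $2k+2$ worlds, and a recursive clause $(\Box_1\psi)^1=\Box_1^{2k+2}(\psi^1\wedge s)$, $(\Box_2\psi)^1=\Box_2((\psi^1\wedge q\wedge\Box_1 q)\vee\neg q\vee\Box_1\neg q)$. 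The $\df_2\oplus_\subseteq\kf$ case is handled exactly as you propose, by composing with the translation of Lemma~\ref{lem:TMlower}.

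The key simplification you are missing is that no distinction between ``configuration heads'' and ``gadget worlds'' is needed at all. In a tree model every world $w$ has a unique $R_1R_2^i$-descendant, so the decoding $w\mapsto\{i:w\models\Box_1\Box_2^i q\}$ is well-defined \emph{everywhere}; the propagated constraint $\Box_3\,com'$ simply asks every world to satisfy $com$ under its own decoding, and this is automatic because in the original many-variable model every world already satisfies $com$. Seriality guarantees that $\Box_1\Box_2^i q$ and $\Box_1\Box_2^i\neg q$ are never simultaneously true, which is all one needs for the converse direction. Thus the obstacle you identify --- that $\Box_3$ reaches the auxiliary worlds --- dissolves: there is nothing to relativise and no dummy computations to attach. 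Your machinery (marking heads with $q$, relativising to $q$, grafting infinite dummy subtrees onto every $q$-leaf, then arguing these do not interfere) can be made to work, but every piece of it is avoidable. The paper's argument is also more modular: the literal-replacement translation is independent of the particular hard formula supplied by Lemma~\ref{lem:TMlower}, and correctness is argued once via tableaux (or via tree models) rather than by revisiting the machine simulation.
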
	
			\begin{proof} 
			We  present a method to translate a diamond-free formula $\phi$ in negation normal form into a diamond-free, 1-variable formula $\phi'$ such that $\phi$ is $\d_2 \oplus_\subseteq \kf$-satisfiable iff $\phi'$ is $\d_2 \oplus_\subseteq \kf$-satisfiable. Let 
				$\manyk{p}$ be all the propositional variables that appear in $\phi$ and assume $q$ is not one of them. 
				Then,
				$p_i^v =  \Box_1\Box_2^i q$ and $(\neg p_i)^v =  \Box_1\Box_2^i \neg q$. $\phi'$ results from $\phi$ by replacing each literal $l$ by $l^v$.  Notice that in a model $\M$ and state $u$, only one of $p_i^v$ and $(\neg p_i)^v$ can be true. Let $\M = (W,R_1,R_2,R_3,V)$, where $(W,R_1\cup R_2)$ is an infinite binary rooted tree ($aR_1b$ iff $b$ the left child of $a$ and $aR_2b$ iff $b$ the right child of $a$),  
				$u \in W$, the root, and $\M, u \models \phi$ (it is not hard to see how to construct such a model from any other); $R_3$ is the transitive closure of $R_1 \cup R_2$. Then, for every $x \in W$, if there are some $y \in W$ and some positive $j \in \nat$, such that $y R_1 R_2^j x$ ($R_2^j$ is defined: $R_2^1 = R_2$ and $aR_2^{j+1}b$ iff there is some $c$ s.t. $aR_2cR_2^jb$), then $y,j$ are unique. Thus, if $V'(q) = \{x \in W\mid  \exists y R_1 R_2^j x $ s.t. $y\in V(p_j) \}$, it is the case that for $\M' = (W,R_1,R_2,R_3,V')$, $\M',u \models \phi'$. On the other hand given a model $\M',u \models \phi'$, we can just define $V(p_i) = \{x \in W\mid  \M',x\models \Box_1\Box_2^i q \}$, thus $\phi$ is satisfiable iff $\phi'$ is. 
				If $\phi$ is diamond-free, then $\phi'$ is diamond-free.
				
				
				Notice that the method above does not work for $\d \oplus_\subseteq \kf$. Thus we use another method: we translate a formula $\phi$ to a formula $\phi^1$ such that $\phi$ is $\d \oplus_\subseteq \kf$-satisfiable iff $\phi^1$ is $\d \oplus_\subseteq \kf$-satisfiable and $\phi^1$ only uses one variable. Let $\manyk{p}$ be the propositional variables that appear in $\phi$ and let $q$ be a new variable (not among $\manyk{p}$). Let $s=  q \wedge \Box_1 q \wedge \bigwedge_{i=1}^{k}\Box^{2i+1}\neg q$. Then, we recursively define: 
				$(p_i)^1=\Box_1^{2i} q$;
				$(\neg p_i)^1=\Box_1^{2i} \neg q$;
				$\bot^1=\bot$;
				$(\neg \bot)^1 = \neg \bot$;
				$(\psi_1\wedge \psi_2)^1 = \psi_1^1 \wedge \psi_2^1$;
				$(\psi_1\vee \psi_2)^1 = \psi_1^1 \vee \psi_2^1$;
				$(\Box_1 \psi)^1=\Box_1^{2k+2} (\psi^1 \wedge s)$ ($\Box_1^{x}$ is $x$ iterations of $\Box_{1}$); finally,
				$(\Box_2 \psi)^1= \Box_2 ((\psi^1 \wedge q \wedge \Box_1 q) \vee (\neg q \vee \Box \neg q) )$.
				Formula $s$ gives a ``mold'' to a model. We can assume that the frames for $\d \oplus_\subseteq \kf$ are of the form $(\N,+1,\leq)$. Furthermore, if we restrict ourselves to formulas of the form $\psi^1$, then we can assume that for every $n \in \N$, $n(2k+2), n(2k+2)+1 \models q$ and for $1\leq i\leq k$, $n(2k+2)+2i+1\models \neg q$. Then, $n(2k+2)\models (\Box_1\psi)^1$ if and only if $(n+1)(2k+2)\models \psi^1$, while $q \wedge \Box_1 q$ is true only at multiples of $2k+2$. So,  $n(2k+2)\models (\Box_2 \psi)^1$ exactly when $(n+1)(2k+2)\models \psi^1$. Therefore, by induction on $\phi$, we can see that $\phi$ is $\d \oplus_\subseteq \kf$-satisfiable iff $\phi^1$ is $\d \oplus_\subseteq \kf$-satisfiable.
				
				We can end this argument like the one for the case of $\d_2 \oplus_\subseteq \k$: the tableau run for $\phi$ can simulate the run for $\phi^1$ and vice-versa. 
				Just map every prefix $\sigma$ from the first tableau to $\sigma'$ of the second one, such that $0'=0$ and $(\sigma.1)'=\sigma'.1^{2k+2}$. Then $\sigma\ \psi$ appears in a branch of the first procedure iff $\sigma'\ \psi^1$ appears in a branch which results from the ``same'' nondeterministic choices in the second procedure. Furthermore, it is not hard to see that $\sigma'\ (p_i)^1$ and $\sigma'\ (\neg p_j)^1$ result in a closed branch iff $i=j$.
				
				For the case of $\df_2 \oplus_\subseteq \kf$, simply notice that the translation from $\d \oplus_\subseteq \kf$ in the proof of Lemma \ref{lem:TMlower} does not introduce any variables.
							\qed 
			\end{proof}			
			
					\subsubsection*{Remarks}
			One may wonder whether we can say the same for the variable-free fragment of these logics. The answer however is that we cannot. The models for these logics have accessibility relations that are all serial. This means that any two models are bisimilar when we do not use any propositional variables, thus any satisfiable formula is satisfied everywhere in any model, thus we only need one prefix for our tableau and we can solve satisfiability recursively on $\phi$ in polynomial time.
			
			Notice that for the proofs above, the requirement that the respective accessibility relations are serial was central. Indeed, otherwise there was no way to achieve these results, as we would not be able to force extra worlds in a constructed model. Then we would have to rely on the complexity contributed by propositional reasoning and at best we would get an \NP-hardness result -- as long as we allowed enough variables in our formula.
			
			Then what about $\df \oplus_\subseteq \kf$? Maybe we could  attain similar hardness results for this logic as for $\df_2\oplus_\subseteq \kf$. Again, the answer is no. As frames for $\df$ come with a serial and transitive accessibility relation, frames for $\df \oplus_\subseteq \kf$ are of the form $(W,R_1,R_2)$, where $R_1\subseteq R_2$, $R_1,R_2$ are serial, and $R_1$ is transitive. It is not hard to come up with the following tableau rule(s) for the diamond-free fragment, by adjusting the ones we gave for $\df_2\oplus_\subseteq \kf$ to simply produce $0.1\ \phi$ from every $\sigma\ \Box_i\phi$.
			This drops the complexity of satisfiability for the diamond-free fragment of $\df \oplus_\subseteq \kf$ to \NP\ (and of the diamond-free, 1-variable fragment to \P), as we can only generate two prefixes during the tableau procedure. The following section explores when we can produce hardness results like the ones we gave in this section.
			
			\section{A General Characterization}
			\label{sec:general}
			
			In this section we examine a more general setting and we conclude by establishing 
			tight 
			conditions 
			that determine 
			the complexity of satisfiability of the diamond-free (and 1-variable) fragments of such multimodal logics.

			A general framework would be to describe each logic with a triple $(N,\subset,F)$, where $N=\{1,2,\ldots,|N|\} \neq \emptyset$, $\subset$ a binary relation on $ N$, and for every  $i \in N$, $F(i)$ is a modal logic; a frame for $(N,\subset,F)$ would be $(W,(R_i)_{i\in N})$, where for every $i\in N$, $(W,R_i)$ a frame for $F(i)$ and for every $i\subset j$, $R_i\subset R_j$. It is reasonable to assume that $(N,\subset)$ has no cycles -- otherwise we can collapse all modalities in the cycle to just one -- and that $\subset$ is transitive. Furthermore, we also assume that all $F(i)$'s have frames with serial accessibility relations -- otherwise there is either some $j\subseteq i$ for which $F(j)$'s frames have serial accessibility relations and $R(i)$ would inherit seriality from $R_j$, or when testing for satisfiability, $\Box_i \psi$ can always be assumed true by default (the lack of diamonds means that we do not need to consider any accessible worlds for modality $i$), which allows us to simply ignore all such modalities, making the situation not very interesting from an algorithmic point of view. Thus, we assume that $F(i)\in \{\d,\t, \df, \sv  \}$.\footnote{We can consider more logics as well, but these ones are enough to make the points we need. Besides, it is not hard to extend the reasoning of this section to other logics (ex. {\sf B}, \sr, \kdfv\ and due to the observation above, also \k, \kf), especially since the absence of diamonds makes the situation simpler.}\footnote{Frames for \d\ have serial accessibility relations; frames for \t\ have reflexive accessibility relations; frames for \df\ have serial and transitive accessibility relations; frames for \sv\ have accessibility relations that are equivalence relations (reflexive, symmetric, transitive).} 
			The cases for which $\subset = \emptyset$ have already had the complexity of their diamond-free (and other) fragments determined in \cite{Hemaspaandra2010GeneralizedModalSat}.
			For the general case, we already have an \EXP\ upper bound from \cite{DemriDeNivelle2005decidingregulargram}, as we explain in the following subsection.

			\subsection{Regular Gramar Logics}
			
			We briefly demonstrate that the \EXP\ upper bound from \cite{DemriDeNivelle2005decidingregulargram} applies in the case of $(N,\subset,F)$.
			In this subsection we present the basic definitions about regular grammar logics with converse and we sketch an argument why  $(N,\subset,F)$ is (or can be reduced to) a regular grammar logic with converse. Definitions and most of our arguments come from \cite{DemriDeNivelle2005decidingregulargram}.\footnote{The reader may notice that we give slightly different notation and that we define certain concept differently from \cite{DemriDeNivelle2005decidingregulargram} -- but to the same effect given our purposes.}
			For every agent $i \in N$, let $\overline{i}$ be a new agent, $\overline{\overline{i}} := i$, $\overline{N} := \{\overline{i} \mid i \in N\}$, and in every frame $(W,(R_i)_{i\in N \cup \overline{N}})$, $R_{\overline{i}} = R_i^{-1}$.
			
			A context-free semi-Thue system\footnote{A context-free semi-Thue system is a lot like a context-free grammar, but with no distinction between terminal and non-terminal symbols and no initial symbol. It is a set of (context-free) rules of the form $a \rightarrow \alpha$, where $a \in N \cup \overline{N}$ and $\alpha \in (N \cup \overline{N})^*$. For $\alpha,\beta,\gamma \in (N\cup \overline{N})^*$, $\alpha a \gamma \Rightarrow \alpha\beta\gamma$ if $a \rightarrow \alpha$ is a rule; $\Rightarrow^*$ is the reflexive transitive closure of $\Rightarrow$; we say that $\alpha$ produces $\beta$ if $\alpha \Rightarrow* \beta$.} $g$ on vocabulary $N \cup \overline{N}$ generates a class of frames $c$ where $c$ has all frames such that
			if $i \rightarrow \manyk{i}$ is a rule of $g$, then for every frame $(W,(R_i)_{i\in N \cup \overline{N}}) \in c$, $R_{i_1}R_{i_2}\cdots R_{i_k} \subseteq R_i$.
			
			We correspond a multimodal logic $l$, associated with the class of frames $c$, on agent set $N$ with a context-free semi-Thue system $g_l$ on vocabulary $N \cup \overline{N}$ if $g_l$ generates $c$ and for every rule $i \rightarrow i_1 \cdots i_k$ in $g_l$, there is also rule $\overline{i} \rightarrow \overline{i}_k \cdots \overline{i}_1$ in $g_l$. 
			Then $l$ is called a regular grammar logic with converse if for every $j \in N$, $g_l(j)$, the language produced in $g_l$ from $j$, is regular. 
			
			To argue that $(N,\subset,F)$ is (or can be reduced to) a regular grammar logic with converse, first we examine the possibility that there is an agent $i \in N$, such that $F(i) = \d$ or \df.\footnote{Notice that we consider the general case here, and not only the diamond-free fragment.} We examine the case where there is no agent $F(j) = \df$; if there is, then the reasoning is similar as in the following. If there are such agents, let $N' = N \cup \{i_D\}$, where $i_D$ is a new agent; $F'$ is such that $F'(i) = \kf$ if $i = i_d$, $F'(i) = \k$ if $F(i)=\d$, and $F'(i) = F(i)$ otherwise;  $\subset'$ is defined the same as $\subset$ on $N$ and there is no $i \in N$ such that $i_D \subset' i$, and $i \subset' i_D$ iff $F(i) = \d$. Then we can simply reduce  $(N,\subset,F)$-satisfiability to $(N',\subset',F')$-satisfiability by mapping $\phi$ to $\phi \wedge\Box_{i_D}\bigwedge_{F(i)=\d}\Diamond_{i}\top$.
			
			Then, all single-agent instances of $l =(N,\subset,F)$ are regular grammar logics with converse \cite{DemriDeNivelle2005decidingregulargram}. We show by induction on $m(j) = |\{j \in N \mid j \subset i \}|$ that $g_l(j)$ and $g_l(\overline{j})$ are regular. If $m(j)= 0$, then by the observation above, $g_l(j)$ is regular; $F(j)$ may have any combination of Factivity (given by grammar rule $j \rightarrow \varepsilon$ in $g_l$), Positive Introspection (given by grammar rule $j \rightarrow jj$), and Negative Introspection (given by grammar rule $j \rightarrow \overline{j}j$). The regular language produced by $j$ and $g_l(j)$ is one of $j$, $j+\varepsilon$, $jj^*$, $\overline{j}(j+\overline{j})^* j + j$, $j^*$, $(j+\overline{j})^*$, and  $(j+\overline{j})^* j$, depending on $F(j)$. If $m(j)> 0$, then let $r(j) = \{ g_l(i) \mid i \subset j \}$ and $r(\overline{j}) = \{ g_l(\overline{i}) \mid i \subset j \}$; by the inductive hypothesis, all languages in $r(j)$ and $r(\overline{j})$ are regular. Then, naturally, $g_l(j)$ is the result of replacing $j$ by $(j+ \bigcup r(j))$ and $\overline{j}$ by $(\overline{j}+ \bigcup r(\overline{j}))$ in one of the regular expressions above. The result is a regular expression.
			
			For example, $\d_2 \oplus_\subset \df$ can easily be reduced to $\k_2 \oplus_\subset \kf$ by mapping $\phi$ to $\Diamond_1 \top \wedge \Diamond_2 \top \wedge \Box_3 (\Diamond_1 \top \wedge \Diamond_2 \top) \wedge \phi$ to impose seriality,\footnote{No, this is not the same reduction we described above, but it helps save on notation for the particular example.} for which the corresponding regular languages produced from 1, 2, and 3 would be $\Box_1$, $\Box_2$, and $(\Box_1 + \Box_2 + \Box_3)^*(\Box_1 + \Box_2 + \Box_3)$ respectively. The reader is encouraged to see \cite{DemriDeNivelle2005decidingregulargram} and \cite{nguyen2011exptime} for a more complete presentation of regular grammar modal logics with converse.

\subsection{Characterizing the Complexity of  $(N,\subset,F)$}

We now proceed to characterize the complexity of $(N,\subset,F)$. 
			For every $i \in N$, let \[\min(i) = \{j\in N \mid  j\subset i \mbox{ or } j=i, \mbox{ and }\not\exists j'\subset j \}\] and $\min(N)=\bigcup_{i\in N} \min(i)$. We can now give tableau rules for $(N,\subset,F)$. Let  
			\begin{itemize}
				\item $n_i(\sigma) =\sigma$, if either
				\begin{itemize}
					\item the accessibility relations of the frames for $F(i)$ are reflexive, or
					\item $\sigma = \sigma'.i$ for some $\sigma'$ and the accessibility relations of the frames for $F(i)$ are  transitive;
				\end{itemize}
				\item $n_i(\sigma)=\sigma.i$, otherwise.
			\end{itemize}
			The tableau rules appear in Table \ref{table:generaltableau}.
			\begin{table}[t]
			\begin{minipage}[l][21ex]{0.21\linewidth}
				\[ 
				\inferrule*{\sigma\ \Box_i\phi}{\sigma\ \Box_j \phi}  
				\] where $j\subset i$
			\end{minipage}\hfill
			\begin{minipage}[l][21ex]{0.22\linewidth}
				\[ 
				\inferrule*{\sigma\ \Box_i\phi}{n_i(\sigma)\ \phi}  
				\] where\\ $i\in \min(N)$
			\end{minipage}\hfill
			\begin{minipage}[l][21ex]{0.20\linewidth}
				\[ 
				\inferrule*{\sigma\ \Box_i\phi}{\sigma\ \phi}  
				\] where the frames of $F(i)$ have reflexive acc. relations
			\end{minipage}\hfill
			\begin{minipage}[l][21ex]{0.25\linewidth}
				\[ 
				\inferrule*{\sigma\ \Box_i\phi}{n_j(\sigma)\ \Box_i \phi}  
				\] where $j\in \min(i)$ and $F(i)$'s frames have transitive acc. relations
			\end{minipage}
	\caption{Tableau rules  for the diamond-free fragment of  $(N,\subset,F)$}	
		\label{table:generaltableau}
\end{table}
	
			From these tableau rules we can reestablish \EXP-upper bounds for all of these cases (see the previous sections). To establish correctness, we only show how to construct a model from an accepting branch for $\phi$, as the opposite direction is easier. Let $W$ be the set of all the prefixes that have appeared in the branch. The accessibility relations are defined in the following (recursive) way:  if $i \in \min(N)$, then 
			$ R_i = \{(\sigma,n_i(\sigma)) \in W^2 \} \cup \{(\sigma,\sigma)\in W^2\mid  n_i(\sigma) \notin W $ or $F(i)$ has reflexive frames$ \}; $
			if $i \notin \min(N)$ and  the frames of $F(i)$ do not have transitive or reflexive accessibility relations, then $R_i=\bigcup_{j\subseteq i}R_j$; 
			if $i \notin \min(N)$ and  the frames of $F(i)$ do have transitive (resp. reflexive, resp. transitive and reflexive) accessibility relations, then $R_i$ is the transitive (resp. reflexive, resp. transitive and reflexive) closure of $\bigcup_{j\subseteq i}R_j$. Finally, (as usual) $V(p) = \{w \in W \mid  w\ p $ appears in the branch$ \}$. Again, to show that the constructed model satisfies $\phi$, we use a straightforward induction.
			
%
			
			By taking a careful look at the tableau rules above, we can already make some simple observations about the complexity of the diamond-free fragments of these logics.
			Modalities in $\min(N)$ have an important role when determining the complexity of a diamond-free fragment. In fact, the prefixes that can be produced by the tableau depend directly on $\min(N)$.
			
			\begin{lemma} \label{lem:reflexivenp}
				If for every $i \in \min(N)$, $F(i)$ has frames with reflexive accessibility relations ($F(i)\in \{\t,\sv\}$), then the satisfiability problem for the diamond-free fragment of $(N,\subset,F)$  is \NP-complete and the satisfiability problem for the diamond-free, 1-variable fragment of $(N,\subset,F)$ is in \P.
			\end{lemma}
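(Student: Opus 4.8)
The lemma claims that when every minimal modality is reflexive, the diamond-free fragment collapses to something like propositional reasoning over a single world. The key structural fact to exploit is that the tableau rules of Table~\ref{table:generaltableau}, under this hypothesis, can never generate a new prefix: the only rule that introduces $\sigma.i$ is the second rule (for $i \in \min(N)$) via $n_i(\sigma)$, but when $F(i)$'s frames are reflexive we have $n_i(\sigma)=\sigma$, so that rule produces $\sigma\ \phi$ rather than $\sigma.i\ \phi$; the fourth rule (transitive closure) only fires for $i$ whose frames are transitive, and applies $n_j$ for $j\in\min(i)\subseteq\min(N)$, which is again the identity under our hypothesis; the first and third rules never change the prefix. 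Hence every formula appearing in a branch carries the prefix $0$.

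First I would make this observation precise: by induction on the tableau derivation, every prefixed formula produced from $0\ \phi$ has prefix $0$. Consequently the branch is just a set of (unprefixed) formulas, closed under the propositional rules plus the "modal" rules which, at a single world, reduce to: from $\Box_i\psi$ with $i$ reflexive infer $\psi$; from $\Box_i\psi$ infer $\Box_j\psi$ for $j\subset i$; from $\Box_i\psi$ with $F(i)$ transitive infer $\Box_i\psi$ (vacuous); and for $i\in\min(N)$ with $F(i)$ reflexive (all of them, by hypothesis, at least the minimal ones feeding any $\Box_i\psi$ we can reach) infer $\psi$. So effectively every box $\Box_i\psi$ that appears forces $\psi$ to appear too. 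This means we can decide satisfiability of $\phi$ by a straightforward recursion: guess truth values for the literals / boxed subformulas at the single world $0$, propagate the box rules, and check for propositional closure.

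For the upper bounds: the \NP\ bound follows because an accepting branch, being a subset of the (polynomially many) subformulas of $\phi$ with the single prefix $0$, is a polynomial-size certificate; one guesses the disjunctive choices, closes under the (deterministic) remaining rules in polynomial time, and checks it is not propositionally closed. For the \P\ bound in the one-variable case, note that with a single variable $p$ there are only a bounded number of literals ($p$, $\neg p$), and the satisfiability of the single-world branch can be decided deterministically: recurse on the structure of $\phi$, maintaining which boxed formulas are asserted, and since the set of "assertable" formulas is downward closed under taking the forced consequents, a simple fixpoint computation (or a recursion on modal depth) runs in polynomial time; the only nondeterminism is in disjunctions, but with one variable the relevant state of a world is determined by finitely many bits, so one can evaluate all branches in polynomial time (this is essentially the same argument as in the "Remarks" paragraph for the variable-free case, adapted to one variable). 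For \NP-hardness, I would reduce from propositional satisfiability: a purely propositional formula over many variables is $(N,\subset,F)$-satisfiable in the diamond-free fragment iff it is propositionally satisfiable, since it can be evaluated at a single world (and such a world-model, with the reflexive/serial closure taken for all relations, is a legitimate frame for $(N,\subset,F)$).

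\textbf{Main obstacle.} The delicate point is the \P\ upper bound for the one-variable fragment. One must check that allowing boxes (even though they don't create new worlds) does not reintroduce intractability: a formula like $\Box_{i_1}\psi_1 \wedge \cdots$ could in principle encode a large conjunction of constraints, but since all these constraints live at the same world over a single variable, the world's "type" is just the truth value of $p$ together with, for each modality $i$, whether $\Box_i$-formulas are being forced — and the box rules only ever force consequents, never create a branching choice. So the recursion on subformulas, threading through a small amount of state, terminates in polynomial time; I would want to state carefully that the model one builds has a single world with each $R_i$ equal to the identity (legitimate since every $F(i)$ considered is reflexive or at least serial, and identity is serial, transitive, and an equivalence relation, and the inclusion conditions $R_i\subseteq R_j$ are trivially met), and that truth of a diamond-free NNF formula at that world matches the tableau branch. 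The rest is routine verification along the lines already sketched for the specific logics earlier in the paper.
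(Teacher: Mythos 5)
Your proposal is correct and follows essentially the same route as the paper: under the hypothesis $n_i(\sigma)=\sigma$ for every $i\in\min(N)$, so the tableau never leaves prefix $0$ and one-world (self-loop) models suffice, giving the \NP\ upper bound, with \NP-hardness from propositional satisfiability and the 1-variable case decided by directly evaluating the formula at the single world for the two possible valuations. Your discussion of the \P\ bound is more elaborate than needed, but it amounts to the paper's observation that all one-world, one-variable models can be enumerated and checked in polynomial time.
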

			\begin{proof} 
				Notice that in this case, for every $i \in \min(N), n_i(0)= 0$, so there is no way to generate any other prefix besides $0$. \NP-hardness is the result of the \NP-hardness of propositional satisfiability. 
				By the above we can restrict ourselves to 1-world models; when we use only one variable, they can all be generated in constant time.
				\qed
			\end{proof}
			
			Taking this reasoning one step further:
			
			\begin{corollary} \label{cor:transitivenp}
				If  $\min(N) \subseteq \{i\}\cup A$ and $F(i)$ has frames with transitive accessibility relations ($F(i)\in \{\df,\sv\}$) and for every $j \in A$, $F(j)$ has frames with reflexive accessibility relations, then the satisfiability problem for the diamond-free fragment of $(N,\subset,F)$  is \NP-complete and the satisfiability problem for the diamond-free, 1-variable fragment of $(N,\subset,F)$ is in \P.
			\end{corollary}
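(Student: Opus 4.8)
The plan is to follow the proof of Lemma~\ref{lem:reflexivenp}, with ``two-world models'' in place of ``one-world models.'' If $i\notin\min(N)$, or if $F(i)=\sv$ (whose frames are reflexive), then every element of $\min(N)$ has reflexive frames and Lemma~\ref{lem:reflexivenp} applies verbatim, so we may assume $i\in\min(N)$ and $F(i)=\df$. The crucial observation is then that the tableau of Table~\ref{table:generaltableau} started from $0\ \phi$ can only ever produce the prefixes $0$ and $0.i$: among the rules, only the second one --- which derives $n_j(\sigma)\ \psi$ from $\sigma\ \Box_j\psi$ for $j\in\min(N)$ --- and the fourth, box-propagation rule --- which derives $n_j(\sigma)\ \Box_k\psi$ from $\sigma\ \Box_k\psi$ for $j\in\min(k)\subseteq\min(N)$ --- can introduce a prefix that is not already present, and this new prefix is always $n_j(\sigma)$ for some $j\in\min(N)\subseteq\{i\}\cup A$. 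For $j\in A$ the frames of $F(j)$ are reflexive, so $n_j(\sigma)=\sigma$; and since $\df$-frames are transitive but not reflexive, $n_i(0)=0.i$ while $n_i(\sigma)=\sigma$ for every $\sigma$ that ends in $i$, in particular $n_i(0.i)=0.i$. A straightforward induction on the length of the derivation then shows that every prefix occurring in a branch lies in $\{0,0.i\}$.

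From this, both bounds follow. By the model construction used in the general correctness argument sketched above, an accepting branch for $\phi$ yields a model of $(N,\subset,F)$ whose worlds are exactly the prefixes appearing in the branch; hence $\phi$ is $(N,\subset,F)$-satisfiable iff it is true at some world of some model of $(N,\subset,F)$ with at most two worlds. For \NP-membership of the diamond-free fragment, observe that over the two available prefixes a branch can contain only $O(|\phi|)$ prefixed formulas --- for the fixed logic, each one is a subformula of $\phi$, possibly with its outermost box-index lowered via the first, index-lowering rule, attached to one of the two prefixes --- so a branch has polynomial size; a nondeterministic procedure guesses the branch (equivalently, the $O(|\phi|)$ choices made at disjunctions) and accepts iff it is not propositionally closed. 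Equivalently, one may guess a model with at most two worlds and an evaluation world, then model-check $\phi$ in polynomial time. \NP-hardness is inherited from propositional satisfiability, since any Boolean formula in negation normal form is a box- and diamond-free formula that is $(N,\subset,F)$-satisfiable iff it is classically satisfiable (evaluate it at a single reflexive point).

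For the diamond-free, 1-variable fragment, recall that the logic $(N,\subset,F)$ is fixed; hence up to isomorphism there are only constantly many models of $(N,\subset,F)$ with at most two worlds, and on each of them at most $2^2$ valuations of the unique variable. The algorithm enumerates all of these, together with the choice of evaluation world, and model-checks $\phi$ in each in polynomial time, so the problem is in \P. The one point that needs care is the prefix analysis of the first paragraph: one must check that interleaving the transitive modality $i$ with itself and with the reflexive modalities of $A$ cannot escape $\{0,0.i\}$, and this is precisely where the hypothesis that $\min(N)$ contains at most one transitive, non-reflexive modality is used --- with two such modalities $i_1,i_2$, the tableau would instead generate the unboundedly long prefixes $0.i_1.i_2.i_1\cdots$. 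Everything else is a routine specialisation of the argument already given for Lemma~\ref{lem:reflexivenp}.
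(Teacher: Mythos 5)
Your proposal is correct and follows essentially the same route as the paper: the paper's (one-line) proof is exactly the observation that under these hypotheses the tableau of Table \ref{table:generaltableau} can only generate the prefixes $0$ and $0.i$, and then the \NP\ and \P\ bounds follow as in Lemma \ref{lem:reflexivenp}. Your more detailed prefix analysis, the reduction of the $\sv$/non-minimal-$i$ cases to Lemma \ref{lem:reflexivenp}, and the two-world model enumeration are just explicit versions of what the paper leaves implicit.
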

			\begin{proof} 
				Like  Lemma \ref{lem:reflexivenp},  we can only generate  prefixes $0$ and $0.i$. 
				\qed
			\end{proof}
			
			In \cite{DBLP:conf/tableaux/Demri00}, Demri shows that satisfiability for $\L_1\oplus_\subseteq \L_2 \oplus_\subseteq \cdots \oplus_\subseteq \L_n$ is \EXP-complete, as long as there are $i<j\leq n$ for which $\L_i \oplus_\subseteq \L_j$ is \EXP-hard. On the other hand, Corollary \ref{cor:transitivenp} shows that for all these logics, their diamond-free fragment is in \NP, as long as $\L_1$ has frames with transitive (or reflexive) accessibility relations. 
			
			Finally, we can establish general results about the complexity of the diamond-free fragments of these logics. For this, we introduce some terminology.
We call a set $A \subset N$ \emph{pure} if for every $i \in A$, $F(i)$'s frames do not have the condition that their accessibility relation is reflexive (given our assumptions, $F[A] \cap \{ \t,\sv \} = \emptyset$). 
We call a set $A \subset N$ \emph{simple} if for some $i \in A$, $F(i)$'s frames do not have the condition that their accessibility relation is transitive (given our assumptions, $F[A] \cap \{ \d,\t \} \neq \emptyset$). 
An agent $i \in N$ is called pure (resp. simple) if $\{i\}$ is pure (resp. simple).
			
			\begin{theorem}\label{prp:general}
				\begin{enumerate}
					\item If there is some $i\in N$ and some pure $A \subseteq \min(i)$ for which $F(i)$ has frames with transitive accessibility relations ($F(i) \in \{\df,\sv \} $) and either
					\begin{itemize}
						\item  $|A|\ =2$ and $A$ is simple, or
						\item $|A|\ =3$,
					\end{itemize} 
					 then the satisfiability problem for the diamond-free, 1-variable fragment of $(N,\subset,F)$  is \EXP-complete;
					\item 
					otherwise, 
					if there is some $i\in N$ and some pure $A \subseteq \min(i)$ for which
					either
					\begin{itemize}
							\item $|A|\ =2$ and there is some pure and simple $j \in \min(N)$,
or
							\item $|A|\ =3$,
					\end{itemize} 
					then the satisfiability problem for the diamond-free, 1-variable fragment of $(N,\subset,F)$   is 
					\PSPACE-complete;
					\item 
					otherwise, 
					if there is some $i\in N$ and some pure $A \subseteq \min(i)$ for which
					$F(i)$ has frames with transitive accessibility relations ($F(i) \in \{\df,\sv \} $) 
					and either
					\begin{itemize}
							\item $|A|\ =1$ and $A$ is simple or 
							\item $|A|\ =2$, 
					\end{itemize} 
					 then the satisfiability problem for the diamond-free (1-variable) fragment of $(N,\subset,F)$  is 
					\PSPACE-complete;
					\item otherwise the satisfiability problem for the diamond-free (resp. and 1-variable) fragment of $(N,\subset,F)$  is \NP-complete (resp. in \P).
				\end{enumerate}
			\end{theorem}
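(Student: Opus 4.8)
The plan is to match an upper and a lower bound in each of the four cases; the one-variable statements will then come for free, because every translation we use introduces no new propositional variable and, in the \NP\ regime, because the relevant models have boundedly many prefixes that can be enumerated in polynomial time when only one variable is present, as in Lemma~\ref{lem:reflexivenp}. The unifying observation is that what separates the four regimes is the \emph{shape of the prefix space} that the tableau of Table~\ref{table:generaltableau} can produce, and this shape is pinned down by $\min(N)$ together with the knowledge of which $F(j)$ are reflexive and which are transitive; the case split of the theorem is a translation of this structural dichotomy into the language of $(N,\subset,F)$.

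For the upper bounds, an \EXP\ bound is available for every instance, either from the regular-grammar-logic argument of this section (\cite{DemriDeNivelle2005decidingregulargram}) or by simulating the tableau with an alternating polynomial-space machine that stores only the formulas of the current prefix plus a polynomial counter for its length, exactly as in Proposition~\ref{prp:upper}. In cases~2 and~3 I would refine this. When no transitive $F(i)$ has a pure $A\subseteq\min(i)$ of the kind that triggers case~1 or~3 — in particular when all relevant containers are non-transitive — a box of a transitive modality is never carried to a strictly longer prefix, the modal depth of the produced formulas strictly decreases along any prefix, the prefix tree has polynomial depth, and an alternating polynomial-\emph{time} algorithm, as for $\d_2\oplus_\subseteq\k$ in Proposition~\ref{prp:upper}, decides satisfiability. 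Otherwise there is a transitive container with at most two pure minimal modalities below it, and when there are two both are transitive; then, since the procedure cannot branch once it has entered the direction of a transitive modality, the reachable prefixes are sub-prefixes of a bounded number of lines $0.(j_1.j_2\cdots j_r)^{\omega}$, and a deterministic polynomial-space walk down each line with a counter suffices, as for $\df_2\oplus_\subseteq\kf$ and $\d\oplus_\subseteq\kf$. Case~4 is handled by extending the reasoning of Lemma~\ref{lem:reflexivenp} and Corollary~\ref{cor:transitivenp}: the configurations that fall through generate either finitely many prefixes or a single line of polynomial depth, so satisfiability is in \NP\ (and in \P\ with one variable), with \NP-hardness inherited from propositional satisfiability.

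For the lower bounds the recipe is uniform: take the witnessing $i$ and pure $A\subseteq\min(i)$ that the hypothesis of the case supplies and embed a canonical hard logic, choosing it according to whether $F(i)$ is transitive. If $F(i)$ is transitive we embed $\d_2\oplus_\subseteq\kf$ when $A$ is simple with $|A|=2$ or when $|A|=3$ (for \EXP, Lemma~\ref{lem:TMlower} and Proposition~\ref{prp:EXPc1var}), and $\df_2\oplus_\subseteq\kf$ or $\d\oplus_\subseteq\kf$ when $|A|=1$ is simple or $|A|=2$ with both modalities transitive (for \PSPACE); if $F(i)$ is not transitive we embed $\d_2\oplus_\subseteq\k$, using two of the pure minimals below $i$ for modalities~$1,2$ and $i$ itself for modality~$3$ (for \PSPACE, Proposition~\ref{prp:PSPford2k}). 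In the cleanest subcase — $F(i)$ transitive, $A=\{a,b\}$ simple, $a$ a $\d$-modality — the pair $(R_a,R_b)$ is serial, $R_a\cup R_b\subseteq R_i$ because $a,b\subset i$, and $R_i$ is transitive, so $(W,R_a,R_b,R_i)$ is a $\d_2\oplus_\subseteq\kf$-frame and the reduction of Lemma~\ref{lem:TMlower} transfers essentially unchanged; when $|A|=3$ the three independent directions below the transitive container are what allow the tableau to keep branching at prefixes $\sigma.j$ even where $j$ is transitive, which is exactly what the alternating-Turing-machine encoding needs. Where a modality in $A$ is $\df$ rather than $\d$, one neutralizes the unwanted transitivity by replacing boxes with fixed compositions of boxes, precisely as $\Box_2$ was replaced by $\Box_1\Box_3\Box_2$ in the $\df_2\oplus_\subseteq\kf$ translation of Lemma~\ref{lem:TMlower}; composing further with the one-variable translations of Proposition~\ref{prp:EXPc1var} keeps every reduction single-variable.

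The step I expect to be the real work is turning a model of the embedded canonical logic into a genuine $(N,\subset,F)$-model. One must choose, for every modality $k\notin A\cup\{i\}$, an accessibility relation $R_k$ that satisfies $F(k)$ and respects \emph{all} inclusions $R_{k_1}\subseteq R_{k_2}$ with $k_1\subset k_2$, including the ones that link the auxiliary modalities to $a$, $b$ and $i$; identity loops, unioned where necessary with the relations already constructed and closed under the properties $F(k)$ demands, suffice, but the order $\subset$ has to be traversed with care — for example one should arrange the canonical model so that $R_i$ is already reflexive whenever some $k$ with $a\subset k\subset i$ has $F(k)\in\{\t,\sv\}$ — and one must check that the translated formula mentions no box or diamond of an auxiliary modality, so those modalities stay invisible to it. A secondary, purely combinatorial obstacle is to verify that the nested ``otherwise''s of parts~2,~3 and~4 coincide exactly with the three structural alternatives used for the upper bounds, so that the four cases are jointly exhaustive and each lands in the stated class.
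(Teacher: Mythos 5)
Your overall strategy (tableau-based upper bounds plus translation-based lower bounds from $\d_2\oplus_\subseteq\kf$, $\d_2\oplus_\subseteq\k$, $\df_2\oplus_\subseteq\kf$, $\d\oplus_\subseteq\kf$) is the same as the paper's, but two steps you treat as routine are exactly where the work lies, and as stated they have gaps. First, your \PSPACE\ upper bound for cases 2 and 3 rests on a dichotomy that is not exhaustive: either ``modal depth strictly decreases along every prefix'' (no box propagation) or ``the reachable prefixes are sub-prefixes of boundedly many lines.'' Outside case 1 you can still have, simultaneously, a transitive container $i$ whose boxes are propagated along the directions of $\min(i)$ \emph{and} other containers (or a pure simple $j$) whose boxes branch into fresh prefixes; the claim that ``the procedure cannot branch once it has entered the direction of a transitive modality'' is false in such mixed configurations, so the prefix space is neither depth-bounded nor a union of lines. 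The paper's upper bound handles precisely this interleaving with a nondeterministic polynomial-space algorithm that visits children in a specific order (``last prefixes'' last) and bounds the number of pending prefixes by observing that every non-last child strictly decreases the maximal modal depth of its formula set; some argument of this kind is needed and is missing from your proposal.

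Second, your case-1 lower bound is only verified in the subcase where the witnessing pair consists of plain serial modalities; there the naive substitution $\Box_1\mapsto\Box_a$, $\Box_2\mapsto\Box_b$, $\Box_3\mapsto\Box_i$ indeed yields a frame embedding. But case 1 only guarantees that $A$ is \emph{simple}, so one member of the pair may be a $\df$-modality, and then replacing $\Box_2$ by $\Box_b$ changes the meaning of the formula (transitivity makes $\Box_b$ quantify over more worlds), so the reduction does not ``transfer essentially unchanged.'' You correctly gesture at the fix --- replace each $\Box_k$ by a fixed composition of boxes --- but this is the technical heart of the case: the paper uses carefully chosen words such as $\Box_{(1)}=\Box_x\Box_x\Box_x\Box_y\Box_x\Box_y\Box_x\Box_x$, $\Box_{(2)}=\Box_y\Box_x\Box_x\Box_y\Box_x\Box_y\Box_x\Box_x$, $\Box_{(3)}=\Box_i\Box_y\Box_x\Box_y\Box_x\Box_x$ (and, for $|A|=3$, a rotating translation that never repeats a minimal modality) precisely so that the propagation of $\Box_i$ lands either on an image prefix $(\sigma.\tau)^m$ or on a prefix that is not an initial segment of any image, and so that the stray prefixes carry only subsets of formulas already handled. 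Without exhibiting such sequences and proving this simulation property, the \EXP-hardness (and the analogous two-box translations $\Box_x\Box_j,\Box_y\Box_j,\Box_i\Box_j$ for case 2) is not established. By contrast, the step you flag as the main difficulty --- extending the constructed model to the auxiliary modalities $k\notin A\cup\{i\}$ --- is comparatively routine and is not where the paper's proof spends its effort. Finally, your reduction of the one-variable claims to Propositions~\ref{prp:PSPford2k} and~\ref{prp:EXPc1var} is fine, provided the composed-box translations above are in place, since they introduce no new variables.
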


					\begin{proof} 
						All the lower bounds (except for the one in 4, of course) are established by providing suitable translations. Notice that by definition, if $|\min(i)|>1$, then $i \notin \min(i)$ (and thus, $i\notin \min(N)$). In every case, assume that $\phi$ is the formula that is given.
						
						We first prove 1. This is done by a translation from $\d_2\oplus_\subseteq \kf$. $\phi$ is translated to $\phi^m$, such that $\phi$ is $\d_2\oplus_\subseteq \kf$-satisfiable if and only if $\phi^m$ is $(N,\subset,F)$-satisfiable. 
						If $A=\{x,y\}$ and $F(x)$ has frames with accessibility relations that are not transitive, then let $\Box_{(1)}=\Box_x\Box_x\Box_x\Box_y\Box_x\Box_y\Box_x\Box_x$ and $\Box_{(2)}=\Box_y\Box_x\Box_x\Box_y\Box_x\Box_y\Box_x\Box_x$, and
						$\Box_{(3)}=\Box_i\Box_y\Box_x\Box_y\Box_x\Box_x$. Then, $\phi^{m}$ results from $\phi$ by replacing $\Box_k$ by $\Box_{(k)}$, where $k=1,2,3$.
						We can see that the tableau for $\phi^m$ follows the one for $\phi$ -- as long as we map (say $w$ is mapped to $w^m$) $0$ to $0$ and $\sigma.1$ to $\sigma^m.x.x.x.y.x.y.x.x$ and $\sigma.2$ to $\sigma^m.y.x.x.y.x.y.x.x$. The important observation here is that if $\sigma^m\ \Box_{3}\psi$ eventually produces $\alpha\ \psi$, then $\alpha$ is either some $(\sigma.\tau)^m$, or it is not an initial segment of any such $(\sigma.\tau)^m$. Therefore, by restricting the branches produced by the tableau for $\phi^m$ to prefixes of the form $(\sigma.\tau)^m$, we have a simulation of the corresponding branch for $\phi$, while there are some other prefixes, but we can see that each of those (say $\pi.a$) prefixes a set of formulas, which is a subset of a set of formulas prefixed by another prefix, which ends at $a$ and is mapped from a prefix of the first tableau. This means that if $\phi$ is satisfiable, than $\phi^m$ is satisfiable. The other direction is easier.
						
						When $|A|\ =3$, we can use a more straightforward translation, which resembles than one given to translate from $\d \oplus_\subseteq \kf$ to $\df_2 \oplus_\subseteq \kf$. For $\{a,b,c\}=\{x,y,z\}$, $\phi^{m_a}$ is defined recursively:
						$(p)^{m_a}=p$;
						$(\neg p)^{m_a}= \neg p$;
						$\bot^{m_a}=\bot$;
						$(\neg \bot)^{m_a} = \neg \bot$;
						$(\psi_1\wedge \psi_2)^{m_a} = \psi_1^{m_a} \wedge \psi_2^{m_a}$;
						$(\psi_1\vee \psi_2)^{m_a} = \psi_1^{m_a} \vee \psi_2^{m_a}$;
						$(\Box_1 \psi)^{m_a}=\Box_b \psi^{m_{b}}$, where if $a=x$ (resp. $y$, $z$), then $b=y$ (resp. $z$, $x$);
						$(\Box_2 \psi)^{m_a}=\Box_c \psi^{m_{c}}$, where if $a=x$ (resp. $y$, $z$), then $c=z$ (resp. $x$, $y$); finally,
						$(\Box_3 \psi)^{m_a}= \Box_i \psi$. As the main translation we can pick any of $\phi^{m_x}$,$\phi^{m_y}$,$\phi^{m_z}$ and as in the previous cases, we can simulate one tableau by the other..
						
						To establish the stated lower bound for 2 when $|A|\ =3$, we can use the exact same translation as above (from $\d_2 \oplus_\subseteq \k$). When $|A|\ =2$ and $A=\{x,y\}$, define $\Box_{(1)}=\Box_x\Box_j$, $\Box_{(2)}=\Box_y\Box_j$, and $\Box_{(3)}=\Box_i\Box_j$. The translation happens just by replacing $\Box_a$ by $\Box_{(a)}$, for all $a=1,2,3$. The translations to prove the lower bound for (iii) are just  simplified versions of the above.
						
						To establish the stated upper bounds, we give bounds for the number of prefixes that a tableau run can produce. For this, assume that all subformulas of $\phi$ are distinct.
						
						If for some $\sigma\ \Box_i\psi$, the branch produces both $\sigma.x\ \psi$ and $\sigma.y\ \psi$ (and $x\neq y$), then $x,y \in \min(i)$ and $F(x),F(y)$ have frames with accessibility relations that are not reflexive, which means we are either in case (i) or case (ii). On the other hand, if for $\sigma\ \Box_i\psi$, the branch produces $\sigma.x\ \Box_i\psi$, then $x \in \min(i)$ and $F(x)$ has frames with accessibility relations that are not reflexive, while $F(i)$ has frames with accessibility relations that are transitive. If $F(x)$ has frames with accessibility relations that are not transitive, or there is also some $y \in \min(i)$ such that $F(y)$ has frames with accessibility relations that are not reflexive, then we are either in case (i), or in case (iii). Otherwise, $\sigma$ and $\sigma.x$ are the only prefixes for $\Box_i\psi$ and thus the only possible prefixes for $\psi$ -- if $\sigma.x'$ or $\sigma.x.x'$ is another prefix for $\psi$, then $x' \in \min(i)$, which is a contradiction, because of the above. This establishes 4, because  every subformula of $\phi$ can only have up to 3 prefixes.
						
						Assume we are not in case 1. We give a non-deterministic algorithm which uses polynomial space to solve satisfiability. The algorithm runs the tableau procedure and uses non-determinism exactly for the non-deterministic propositional rule. Since it uses only polynomial space, it (possibly) cannot hold the whole branch in memory, so it explores the prefixes in a certain order.
						This order is what enables the algorithm to use only polynomial space. Every time the algorithm visits prefix $\sigma$, it applies all the rules that have as premise a formula prefixed by $\sigma = \sigma'.y$. This possibly results in new sets of formulas that are prefixed by new prefixes, $\manyk{\sigma.x}$. If there is some $x_a$ such that there is some $i \in N$ for which $x_a,y\in \min(i)$ and $F(i)$ has frames with transitive accessibility relations, then the algorithm visits $\sigma.x_a$ last (there is at most one) and $\sigma.x_a$ is called a \emph{last prefix}. 
						
						If $\sigma.x_b$ is not a last prefix, then the maximum modal depth\footnote{The nesting depth of the boxes in a formula.} of the formulas prefixed by $\sigma.x_b$ is one less than the maximum modal depth of the formulas prefixed by $\sigma'$. This bounds the number of prefixes that are not last prefixes and that are initial segments of a current prefix by at most $2|\phi|$.
						The space the algorithm uses at any time is the number of prefixes it has scheduled to visit (and has not done so yet), times some quantity which is linear with respect to $|\phi|$ (the formulas prefixed by those prefixes). This number of prefixes is at most $|N|$ times the number of initial segments of the current prefix that are not last prefixes. But we argued above that these prefixes are at most $|\phi|$.
						
						Case 4 is given by Corollary \ref{cor:transitivenp}.
									\qed
					\end{proof}
			
			\section{Final Remarks}
			
			We examined the complexity of satisfiability for the diamond-free fragments and the diamond-free, 1-variable fragments of multimodal logics equipped with an inclusion relation $\subset$ on the modalities, such that if $i \subset j$, then in every frame $(W,\manyn{R})$ of the logic, $R_i \subseteq R_j$ (equivalently, $\Box_j \rightarrow \Box_i$ is an axiom). We gave a complete characterization of these cases (Theorem \ref{prp:general}), determining that, depending on $\subset$, every logic falls into one of the following three complexity classes: \NP\ (\P\ for the 1-variable fragments), \PSPACE, and \EXP\ -- Theorem \ref{prp:general} actually distinguishes four possibilities, depending on the way we prove each bound. We argued that to have nontrivial complexity bounds we need to consider logics based on frames with at least serial accessibility relations, which is a notable difference in flavor from the results in \cite{hemaspaandra2001PoorMan,Hemaspaandra2010GeneralizedModalSat}.
			
			One direction to take from here is to consider further syntactic restrictions and Boolean functions in the spirit of \cite{Hemaspaandra2010GeneralizedModalSat}. Another would be to consider different classes of frames. Perhaps it would also make sense to consider different types of natural relations on the modalities and see how these results transfer in a different setting.  
			From a Parameterized Complexity perspective there is a lot to be done, 
			such as limiting the modal depth/width, which are parameters that can remain unaffected from our ban on diamonds. For the cases where the complexity of the diamond-free, 1-variable fragments becomes tractable, a natural next step would be to examine whether we can indeed use the number of diamonds as a parameter for an FPT algorithm to solve satisfiability.
			
			Another direction which interests us is to examine what happens with more/different kinds of relations on the modalities.
			An example would be to introduce the axiom $\Box_i \phi \rightarrow \Box_j\Box_i \phi$, a generalization of Positive Introspection. 
			This would be of interest in the case of the diamond-free fragments of these systems,  as it brings us back to our motivation in studying the complexity of Justification Logic, where such systems exist. Hardness results like the ones we proved in this paper are not hard to transfer in this case, but it seems nontrivial to immediately characterize the complexity of the whole family.
			
			\subsubsection*{Acknowledgments}
			The author is grateful to anonymous reviewers, whose input has greatly enhanced the quality of this paper.

	\end{document}